%% file: main_arxiv.tex
\documentclass[letterpaper]{article} 
\usepackage[preprint]{aaai2027}  
\usepackage[hyphens]{url}  
\usepackage{graphicx} 
\usepackage{natbib}  
\usepackage{caption} 
\usepackage{booktabs}
\usepackage{amsmath,amssymb,amsthm}
\input{figures/figstyle}

\newtheorem{definition}{Definition}
\newtheorem{lemma}{Lemma}
\newtheorem{proposition}{Proposition}
\newtheorem{corollary}{Corollary}

\newcommand{\Ecomp}{P^{\mathrm{comp}}}

\title{Collusion with Competitive Marginals:\\Price-Level Audits Are Blind by Construction}

\author{
    Xin Xu\textsuperscript{\rm 1},
    Chengrui Wu\textsuperscript{\rm 2},
    Jiayu Lu\textsuperscript{\rm 1},
    Kaizhen Tan\textsuperscript{\rm 1},
    Siru Tao\textsuperscript{\rm 1},
    Hanzhe Hong\textsuperscript{\rm 1}
}
\affiliations{
    \textsuperscript{\rm 1}Carnegie Mellon University\\
    \textsuperscript{\rm 2}Zhejiang University\\
    xuxin@cmu.edu, chengrui.24@intl.zju.edu.cn, stellalu@tepper.cmu.edu,\\
    kaizhent@cmu.edu, sirutao@andrew.cmu.edu, hanzheho@andrew.cmu.edu
}

\begin{document}

\maketitle

\begin{abstract}
Empirical work on algorithmic collusion asks one question of the data: are prices
supracompetitive? We show that this question can be answered ``no'' by a conspiracy
that is nonetheless profitable. Consider bidding agents that couple only through the
joint distribution of their unexplained bid components, leaving every agent's own bid
law exactly at the competitive law. Any test whose input is a single agent's price or
bid history then has power exactly equal to its false-positive rate, for every
coupling strength up to comonotonicity. The published detection methodology is
therefore blind to this conduct by construction rather than underpowered, and no
sample size repairs it.

Three empirical results follow. First, the mechanism appears in real language-model
agents: twenty models from nineteen independent developers, three deployment prompts
each, show residual correlation of $+0.053$ between two deployments of one model
against $+0.0001$ across models, with a $95\%$ interval clustered by developer of
$[0.030,0.078]$, under an auditor that sees every order feature and is fitted out of
sample. Second, the coupling falls monotonically as sampling temperature rises
($p = 0.002$), turning a deployment parameter into a candidate mitigation. Third, on 24 days of
Ethereum block-building auction data covering 77{,}684 bids from 39 bidders, the
honest population of bidder pairs is itself so dependent that a screen held at a
5\% false-positive rate must sit above a floor of $+0.50$ to $+0.81$, which is 20 to
32 times the family-wise sampling threshold and does not fall as the audit window
grows. Since
lawful multi-identity operation and conspiracy are behaviourally indistinguishable
here, we argue that the tractable regulatory target is not detection but counting.
Measured concentration depends heavily on the counting rule: resolving 40 bidding
identities into 23 operators raises the Herfindahl index by $+247.5\%$, and adding
behavioural clusters recovered from public bid streams reaches $+324.5\%$.
\end{abstract}

\begin{figure*}[t]
\centering
\input{figures/fig1_mechanism}
\caption{The mechanism. \textbf{(a)} Two agents deployed by different firms share an
upstream origin, either one base model under two deployment prompts or a
pseudorandom function keyed on public round data, and never exchange a message.
\textbf{(b)} Each agent's own bid distribution is identical under competitive and
collusive conduct, so every test that reads one agent's marginal has power equal to
its size. \textbf{(c)} The conspiracy lives entirely in the copula. Coupling the
unexplained components lowers the expected maximum bid, which is what the seller
receives, at unchanged marginal spread.}
\label{fig:mechanism}
\end{figure*}

\section{Introduction}

Two firms deploy bidding agents built on the same foundation model. The agents never
communicate. Each firm tunes its own deployment prompt, and each agent's bids, viewed
alone, are indistinguishable from those of a competitive bidder. Yet the seller
systematically receives less than it would from two independent bidders. Nothing in
the current empirical toolkit for algorithmic collusion would register this.

That toolkit measures price levels. Since \citet{calvano2020ai} showed that
independent Q-learning agents converge to supracompetitive prices, and a subsequent
literature extended the finding to language-model agents
\citep{fish2024llm,llmcollusion2024b}, the empirical question has been whether
observed prices exceed a competitive benchmark. Reported effects are large, on the
order of a 22\% price lift falling to 7 to 10\% under agent heterogeneity
\citep{heterogeneity2026}. Regulatory proposals inherit the framing and audit
algorithms one at a time \citep{hartline2024auditing,hartline2025auditing}.

This paper identifies a profitable conduct that the framing cannot see. Fix each
agent's marginal bid distribution at its competitive value and change only the
dependence between agents. Seller revenue in a first-price setting is the maximum
bid, and the expected maximum of a fixed set of marginals is minimised by the
comonotone coupling, so coordinating the joint law alone transfers rent. Because the
marginals never move, there is no price lift anywhere in the data. The construction
needs no communication channel: agents that evaluate a common function of public
round data are already coupled, and shared or merely similar model families supply
that common function for free \citep{kleinberg2021monoculture,oraclefingerprint}.

Two features make this awkward for the tools built to catch collusion. There is no
agreement, so the doctrine that treats a meeting of the minds as the object of
enforcement has nothing to point at, and the two firms need not know their agents are
coupled at all. And the party that gains need not be either firm: whoever supplies the
shared model fixes the coupling, so the rent can accrue to a bidder-adjacent third
party while the nominal competitors bid in good faith. The harm survives the removal
of every element these tools were designed around, an agreement to find and a price
signal to measure.

Our central claim is a statement about detection rather than about collusion. Let a
\emph{price-level test} be any test whose input is one agent's marginal bid or price
distribution compared against a competitive benchmark. Every empirical detection
method we are aware of in the algorithmic-collusion literature is of this form. Against a conduct
that fixes marginals, every such test has power exactly $\alpha$, its own size. This
is not low power that more data would fix. It is an identity, and it holds up to
perfect coupling.

\paragraph{Contributions and evidence.}
The formal content is short and its proof is one line. What makes it more than a
curiosity is that the conduct it describes is profitable at magnitudes comparable to
the effects the price-level literature reports, that it appears in real
language-model agents, and that a deployed market already sits at the dependence
level where it bites.

\begin{itemize}
\item A construction with fixed marginals and zero communication, and the exact
      powerlessness of marginal audits against it, verified in simulation across
      coupling strengths and sample sizes up to $2\times10^{4}$
      (Figure~\ref{fig:power}).
\item Replication with real language-model agents. Across nineteen independent
      developers, two deployments of one model couple ($+0.053$, clustered $95\%$
      interval $[0.030,0.078]$) while different models do not ($+0.0001$), and the
      contrast survives an auditor that sees every observable and is fitted out of
      sample.
\item Sampling temperature as a candidate dial: the coupling weakens as temperature
      rises, consistently across model families though not at conventional
      significance with five temperature points. Temperature is a deployment parameter
      a regulator could mandate, which makes it worth testing at scale.
\item A market calibration on 24 days of Ethereum MEV-Boost relay data. The binding
      constraint on detection is the spread of the honest dependence population, not
      sampling error, so the smallest detectable coupling does not vanish as the
      audit window grows.
\item A positive result. Conspiracy and lawful multi-identity operation leave the
      same behavioural trace here, so we measure how many independent decision-makers
      a market contains rather than trying to separate the two. Entity resolution
      raises measured concentration by $+247.5\%$, and behavioural clustering from
      public bid streams reaches $+324.5\%$.
\end{itemize}

\section{Related Work}

\paragraph{Algorithmic collusion.} That learning agents reach supracompetitive prices
without communicating is established for reinforcement learners \citep{calvano2020ai}
and for language-model agents \citep{fish2024llm,llmcollusion2024b}, the moderating
effect of heterogeneity has been quantified \citep{heterogeneity2026}, and the policy
debate is surveyed by \citet{algocollusionsurvey}. We claim none of this. Our claim
concerns what those studies measure: all of them read price levels, and we exhibit
conduct that leaves price levels untouched.

\paragraph{Screening.} Pairwise correlation of residual bids is standard cartel
screening \citep{harrington2008detecting,oecd2022screening}. We claim not the statistic
but a property of the population it is applied to. A universal impossibility claim would
also be false, since \citet{chassang2022robust} construct observational screens with
real power against noncompetitive bidding; our result is narrower and concerns tests
reading one agent's marginal.

\paragraph{Auditing.} \citet{hartline2024auditing} and \citet{hartline2025auditing}
give regulator-side audits defining non-collusion as low calibrated (swap) regret in
each agent's own transcript. These are not price-level tests, so
Proposition~\ref{prop:blind} does not cover them; we show below that they do not
separate the construction either, for a different and more conditional reason.

\paragraph{Correlated models.} \citet{kleinberg2021monoculture} show that shared
algorithms can lower collective decision quality with no correlated shock, and
\citet{oraclefingerprint} measure correlated forecasting errors across independently
developed models. Both establish that shared or similar model origins produce
correlated outputs, so our construction need not assume the dependence into existence.

\paragraph{Coupling and market.} That the comonotone coupling is extremal in the
supermodular order is classical \citep{muller2002comparison,puccetti2015extremal} and
Proposition~\ref{prop:comonotone} claims no novelty. Bidding incentives to compete or
collude in Ethereum block building are documented \citep{mevbuildercollusion}, and that
market's structure has been measured \citep{mev2305,mev2605}; we use it as a
measurement setting, not a new object of study.

\section{Setup}

Rounds $t=1,\dots,N$. In round $t$ a set $\mathcal{M}_t \subseteq \{1,\dots,M\}$ of
bidders competes for one indivisible object, the highest bid wins, and the seller
receives it. Write bidder $i$'s bid as
\[
  b_{it} = g_{it} + \sigma \varepsilon_{it},
  \qquad
  g_{it} = \mu_t + a_i + \beta_i^{\top} x_t,
\]
where $\mu_t$ is the round effect, $a_i$ a persistent bidder effect, $\beta_i^\top
x_t$ bidder-specific loadings on public covariates, and $\varepsilon_{it}$ the part
no public information explains, with marginal law $F_i$. An auditor reconstructs
$g_{it}$ from public data and cannot reconstruct $\varepsilon_{it}$. Write
\[
  \kappa = \frac{\operatorname{sd}(\sigma\varepsilon)}{\operatorname{sd}(g)}
\]
for the unexplainable share of cross-bidder dispersion.

\begin{definition}[Conduct]
A \emph{conduct profile} is a joint law $P$ for
$(\varepsilon_{1t},\dots,\varepsilon_{Mt})$ whose marginals are the prescribed
$F_1,\dots,F_M$. \emph{Competitive conduct} $\Ecomp$ is the product measure. A
\emph{conspiracy on coalition} $S$ is a conduct under which
$(\varepsilon_{it})_{i\in S}$ is dependent, with non-members independent.
\end{definition}

The marginals are pinned under both hypotheses. A conspiracy therefore changes the
copula \citep{copulanelsen} and nothing else. Everything below follows from that one
modelling choice.

\begin{lemma}[Zero-communication realisation]
\label{lem:construction}
Let $k$ be a key shared by the coalition $S$, let $\mathrm{PRF}$ be a pseudorandom
function, and let $\mathrm{id}_t$ be public round data. Put $v_t =
\mathrm{PRF}_k(\mathrm{id}_t)/2^{\lambda}$ and $z_t = \Phi^{-1}(v_t)$, and for $i \in
S$ set
\[
  \varepsilon_{it}
  = F_i^{-1}\!\Big(\Phi\big(\sqrt{\rho}\, z_t + \sqrt{1-\rho}\, e_{it}\big)\Big),
  \qquad e_{it} \sim N(0,1) \ \text{i.i.d.}
\]
Then $\varepsilon_{it} \sim F_i$ exactly for every $\rho \in [0,1]$, and no message
is exchanged, since each member computes $z_t$ from $k$ and public data alone.
Consequently $\rho=0$ reproduces $\Ecomp$ and $\rho=1$ gives the comonotone coupling
on $S$.
\end{lemma}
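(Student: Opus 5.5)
The argument rests on the idealisation that $v_t$ is exactly $\mathrm{Unif}(0,1)$ and independent of the private draws $(e_{it})_{i\in S}$. With the PRF modelled as a random function, the map $\mathrm{id}_t \mapsto \mathrm{PRF}_k(\mathrm{id}_t)$ gives uniform outputs on distinct inputs. The discretisation $2^{-\lambda}$ and computational indistinguishability are second-order and can be absorbed into a negligible total-variation term. Given this, $z_t = \Phi^{-1}(v_t) \sim N(0,1)$ by the probability integral transform. The same transform, applied in reverse, is the engine of the whole proof.

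The marginal claim comes first. Fix $i \in S$ and $\rho\in[0,1]$, and set $w_{it} = \sqrt{\rho}\,z_t + \sqrt{1-\rho}\,e_{it}$. Then $w_{it}$ is a sum of independent centred Gaussians with variances $\rho$ and $1-\rho$, so $w_{it}\sim N(0,1)$. This covers the endpoints, where one term vanishes. Hence $u_{it}=\Phi(w_{it})\sim\mathrm{Unif}(0,1)$. Take $F_i^{-1}$ to be the generalised (left-continuous) quantile function. The standard quantile lemma $F_i^{-1}(u)\le x \iff u\le F_i(x)$ then gives $\Pr(\varepsilon_{it}\le x)=F_i(x)$ exactly, with no continuity assumption on $F_i$. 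The law of $\varepsilon_{it}$ does not involve $\rho$ at all, which is what pins the marginals.

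The dependence claims come next. At $\rho=0$ we have $w_{it}=e_{it}$, and these are independent across $i\in S$ and independent of the non-members. So the joint law is the product of the $F_i$, which is $\Ecomp$. At $\rho=1$ every member has $w_{it}=z_t$, so $\varepsilon_{it}=F_i^{-1}(U)$ with a single common uniform $U=\Phi(z_t)=v_t$. This is the definition of the comonotone coupling of $(F_i)_{i\in S}$. For intermediate $\rho$ the construction is a Gaussian copula with equicorrelation $\rho$, so it is a conduct profile in the sense of the Definition.

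The no-communication claim is then read off the computation graph. Member $i$'s bid depends only on $k$ (agreed ex ante, or implicit in a shared model), on the public $\mathrm{id}_t$, and on its private $e_{it}$. No function of another member's state enters.

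The main obstacle is the idealisation step, because an actual PRF output is deterministic given $k$ and takes values on a grid. For this I would state the result in the random-oracle or ideal-PRF model, where the exact claim holds with $v_t$ uniform on the grid shifted to avoid the values $0$ and $1$. I would then note that any test distinguishing the realised law from the ideal one yields a PRF distinguisher, so the deviation is negligible in $\lambda$. Everything downstream needs only the ideal statement.
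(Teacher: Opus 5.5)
Your main line is the paper's own one-line proof. You show that $\sqrt{\rho}\,z_t+\sqrt{1-\rho}\,e_{it}\sim N(0,1)$, that its $\Phi$-image is uniform, and that its $F_i^{-1}$-image has law $F_i$, and the $\rho=0$ product law and the $\rho=1$ common-uniform representation then follow. Using the generalised quantile, so that no continuity of $F_i$ is needed, and writing out the endpoint cases the paper calls immediate, are sound elaborations. The paper's proof likewise just posits $z_t\sim N(0,1)$ independent of the $e_{it}$, so stating the lemma in an idealised model is the right framing.

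What does not hold is your account of how that idealisation transfers to an actual PRF. If $v_t$ is uniform on a grid, shifted or not, then $z_t$ is discrete and the exact claim fails. At $\rho=1$, $\varepsilon_{it}=F_i^{-1}(v_t)$ has at most $2^{\lambda}$ atoms. Its total-variation distance from any continuous $F_i$ (the exponential in the paper's simulation, say) is therefore $1$, not negligible. For $\rho<1$ the total-variation error is small at fixed $\rho$ but tends to $1$ as $\rho\to1$.

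The discretisation is uniformly harmless only in Kolmogorov distance, which stays below $2^{-\lambda}$. This is because $\varepsilon_{it}$ is a monotone function of $w_{it}$, and $z\mapsto\Phi\big((y-\sqrt{\rho}\,z)/\sqrt{1-\rho}\big)$ is monotone and bounded.

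Likewise, a PRF-distinguisher reduction controls only computationally bounded tests. By contrast, ``exactly'' concerns arbitrary measurable tests, and so does Proposition~\ref{prop:marginal}, which quantifies over every size-$\alpha$ test. An unbounded single-bidder test can exploit the finite key space once $N$ is large.

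So exactness is a property of the model in which $v_t$ is continuous-uniform and fresh each round. That is all the paper asserts, and it explicitly disclaims any computational-indistinguishability statement. Drop the claims that exactness holds on the grid or up to negligible total variation.
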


\begin{proof}
If $z,e \sim N(0,1)$ are independent then $\sqrt{\rho}z + \sqrt{1-\rho}e \sim
N(0,1)$, so its $\Phi$-image is uniform on $(0,1)$ and its $F_i^{-1}$-image is $F_i$.
The remaining claims are immediate.
\end{proof}

\paragraph{What the pseudorandom function does not buy.} It removes the need for a
channel and makes the shared draw unpredictable without $k$, so an outsider cannot
say in advance which rounds are favoured. It does not conceal the dependence. If
$v_t$ were genuinely uniform the coalition would still share $z_t$, so the joint law
remains statistically distinguishable from independence. We make no
computational-indistinguishability claim about the joint law, and the rest of the
paper concerns statistical thresholds precisely because none is available.

For language-model agents the shared draw needs no key. Two deployments of one base
model reading the same round data already compute correlated functions of it, which
is the coupling we measure in a later section.

\section{Marginal Audits Have Power Exactly $\alpha$}

\begin{proposition}
\label{prop:marginal}
Let $A$ be any test of size $\alpha$ under $\Ecomp$ whose input is a single bidder
$i$'s history $\{(b_{it}, x_t)\}_{t \le N}$. Under any conduct satisfying
Lemma~\ref{lem:construction},
\[
  \Pr\nolimits_{P}(A \text{ rejects}) = \alpha .
\]
\end{proposition}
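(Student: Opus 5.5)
The plan is to show that the law of the test's input, the single-bidder history $H_i=\{(b_{it},x_t)\}_{t\le N}$, is the same under $P$ as under $\Ecomp$. Once that holds, the rejection event $\{A(H_i)\text{ rejects}\}$ is measurable with respect to $H_i$, together with any auxiliary randomisation of $A$ that is independent of the data. Its probability is therefore a functional of the law of $H_i$ alone, so $\Pr_P(A\text{ rejects})=\Pr_{\Ecomp}(A\text{ rejects})=\alpha$.

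The key step is equality in law of $H_i$. Write $b_{it}=g_{it}+\sigma\varepsilon_{it}$. Here $g_{it}$ is a function of $(\mu_t,a_i,\beta_i,x_t)$, and the conduct profile does not touch these: Definition~1 specifies only the joint law of the $\varepsilon$'s. So the law of $H_i$ is determined by the joint law of $\big((\mu_t,x_t)_{t\le N},(\varepsilon_{it})_{t\le N}\big)$, and I need this joint law to agree under both hypotheses.

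\begin{itemize}
\item For $i\notin S$, both hypotheses make $\varepsilon_{it}$ an independent draw from $F_i$.
\item For $i\in S$, Lemma~\ref{lem:construction} gives $\varepsilon_{it}\sim F_i$ for each $t$.
\end{itemize}

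This per-round statement is not yet enough, because $H_i$ is a sequence. I must also check that $(\varepsilon_{it})_t$ is i.i.d.\ across rounds and independent of $(\mu_t,x_t)$, as under $\Ecomp$.

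Across-round independence is the main obstacle. Under the construction, $\varepsilon_{it}=F_i^{-1}(\Phi(\sqrt\rho z_t+\sqrt{1-\rho}e_{it}))$. The $e_{it}$ are i.i.d., and $z_t=\Phi^{-1}(v_t)$ with $v_t=\mathrm{PRF}_k(\mathrm{id}_t)/2^\lambda$.

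\begin{itemize}
\item I will model $v_t$ as i.i.d.\ uniform across rounds and independent of covariates. This is the idealisation the lemma implicitly uses when it treats $z_t\sim N(0,1)$ independent of $e_{it}$; it amounts to a random function evaluated at distinct round identifiers, and to ignoring the $2^{-\lambda}$ discretisation.
\item Given that, the pairs $(z_t,e_{it})$ are i.i.d.\ over $t$ and independent of $(\mu_t,x_t)$.
\item Hence $\varepsilon_{it}$ is an i.i.d.\ sequence with marginal $F_i$, independent of the public data.
\end{itemize}

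This is exactly its law under $\Ecomp$, so the law of $H_i$ coincides under the two hypotheses.

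I will state this idealisation explicitly, and note the fallback if one insists on a real PRF. The equality then holds only up to the PRF's distinguishing advantage, giving $|\Pr_P-\alpha|\le\mathrm{negl}(\lambda)$ for efficient $A$ and exact equality in the ideal model. The paper's "one-line" framing suggests it intends the ideal model. The remaining step is routine: apply the push-forward identity $\Pr(A\text{ rejects})=\mathbb{E}[\phi_A(H_i)]$ for the (possibly randomised) test function $\phi_A$. It holds for every $\rho\in[0,1]$, including the comonotone case $\rho=1$, because nothing in the argument depended on $\rho$.
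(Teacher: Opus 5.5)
Your proof is correct and is essentially the paper's own: the paper also argues that, because the shared draws $z_t$ are independent across rounds, bidder $i$'s sequence $(\varepsilon_{it})_t$ is i.i.d.\ $F_i$ exactly as under $\Ecomp$, so the whole single-bidder process law, and with it the law of the rejection event, is unchanged. You are more explicit than the paper about what this rests on (the random-function idealisation of $\mathrm{PRF}_k$ at distinct round identifiers, independence from the public covariates, randomised tests), though your real-PRF fallback bound would also have to account for the $2^{-\lambda}$ discretisation you set aside, which is not a negligible perturbation at $\rho=1$.
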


\begin{proof}
The shared draws $z_t$ are independent across rounds, so under
Lemma~\ref{lem:construction} bidder $i$'s own sequence $(\varepsilon_{it})_t$ is
i.i.d.\ $F_i$, exactly as under $\Ecomp$: the whole single-bidder process law, and not
merely its per-round marginal, is unchanged. A measurable function of data with
identical laws has an identical law, and so does the rejection event.
\end{proof}

The proof is one line and the content is not in it. The content is that
Lemma~\ref{lem:construction} supplies a \emph{profitable} conduct with this property,
which the next proposition establishes. The statement is also stronger than low
power. Power equals size identically, for every $\rho$ up to $1$.

\paragraph{The independence of the shared draw is what buys this, and it is tight.}
Because the whole process law is at stake and not just the marginal, a single-bidder
test may look for serial structure as well as shape, so it is worth being precise about
what the construction needs. Simulating one bidder's own series confirms it is
indistinguishable from the competitive one at every $\rho$ up to comonotonicity, in
marginal shape and at every lag. Make the shared draw \emph{persistent} instead, an
AR(1) common signal rather than an independent one, and the bidder's own series
inherits the autocorrelation: at persistence $0.5$ its lag-one autocorrelation is
$0.40$, and at $0.9$ even a marginal goodness-of-fit test rejects. Serial dependence in
one agent's residuals is therefore evidence of a persistent common signal, which is a
different and detectable object. The impossibility is bought by the shared draw being
refreshed each round, and it is bought exactly there.

\begin{proposition}[Classical]
\label{prop:comonotone}
Among all couplings of $(\varepsilon_{it})_{i \in S}$ with marginals fixed at
$(F_i)_{i\in S}$, the comonotone coupling minimises $\mathbb{E}[\max_{i \in S}
b_{it}]$.
\end{proposition}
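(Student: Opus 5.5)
Since $b_{it}=g_{it}+\sigma\varepsilon_{it}$ with $g_{it}$ fixed given the round, the problem reduces to minimising $\mathbb{E}[\max_{i\in S}(c_i+\sigma\varepsilon_{it})]$ over couplings with fixed marginals, where the constants are $c_i=g_{it}$ (conditioning on the public data if $g$ is random but independent of $\varepsilon$). The marginals of $Y_i=c_i+\sigma\varepsilon_{it}$ are fixed distributions $G_i$, so it suffices to prove that the comonotone coupling minimises $\mathbb{E}[\max_i Y_i]$ among couplings of $(G_i)_{i\in S}$. I assume each $Y_i$ is integrable, so the expectation is finite for every coupling; otherwise the statement is vacuous.

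The approach is to write the expected maximum as an integral of the distribution function of the maximum. For any random variable $W$,
\[
\mathbb{E}[W]=\int_0^\infty \Pr(W>y)\,dy-\int_{-\infty}^0 \Pr(W\le y)\,dy .
\]
Apply this to $W=\max_i Y_i$, for which $\Pr(W\le y)=\Pr(Y_i\le y\ \forall i)$. The right side is decreasing in the joint distribution function $H(y,\dots,y)$ evaluated on the diagonal. It is therefore enough to maximise $H(y,\dots,y)$ pointwise in $y$ over all couplings.

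The Fréchet--Hoeffding upper bound gives $\Pr(Y_i\le y\ \forall i)\le \min_i G_i(y)$ for every coupling, since the intersection of events is contained in each one. The comonotone coupling $Y_i=G_i^{-1}(U)$, with a single uniform $U$, attains the bound exactly: $\{G_i^{-1}(U)\le y\}=\{U\le G_i(y)\}$, so the intersection is $\{U\le \min_i G_i(y)\}$. Consequently the comonotone coupling pointwise maximises $\Pr(W\le y)$ and minimises $\Pr(W>y)$, and hence minimises $\mathbb{E}[W]$.

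It remains to check that this is the coupling the paper means, namely that $\rho=1$ in Lemma~\ref{lem:construction} produces $\varepsilon_{it}=F_i^{-1}(\Phi(z_t))$. Composing with the increasing affine maps $y\mapsto c_i+\sigma y$ preserves comonotonicity and gives $G_i^{-1}=c_i+\sigma F_i^{-1}$, so the two couplings agree.

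The main obstacle is bookkeeping rather than substance. The tail integral needs integrability so that no $\infty-\infty$ appears, and the quantile identity $G^{-1}(u)\le y \iff u\le G(y)$ needs the left-continuous generalised inverse. I would state both conventions explicitly. As a cross-check, one could alternatively note that $-\max$ is supermodular and cite the supermodular-order extremality of comonotonicity \citep{muller2002comparison}. However, the direct Fréchet-bound argument above is self-contained, so I would use it.
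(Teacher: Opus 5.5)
Your proof is correct, but it takes a different route from the paper's. The paper's one-line proof observes that the coordinate maximum is submodular. It then cites the classical fact that the comonotone coupling is the maximum of the Fr\'echet class in the supermodular order, which is the route you mention only as a cross-check.

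You avoid the supermodular-order theorem altogether. Since $\Pr(\max_i Y_i\le y)$ depends on the coupling only through the joint distribution function on the diagonal, the bound $\Pr\bigl(\bigcap_i\{Y_i\le y\}\bigr)\le\min_i G_i(y)$, attained by $Y_i=G_i^{-1}(U)$, already makes the comonotone maximum stochastically smallest. The tail-integral formula then turns this into the expectation statement.

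\emph{What your route buys.} The proof is self-contained, resting only on the upper Fr\'echet--Hoeffding bound. It gives first-order dominance directly, hence minimisation of $\mathbb{E}[\phi(\max_i b_{it})]$ for every increasing $\phi$. It also gives the explicit law $\Pr(\max_i Y_i\le y)=\min_i G_i(y)$ of seller revenue under comonotonicity, which is what you would use to compute the rent. The paper's route delivers the dominance statement too, since $\phi\circ\max$ is submodular for increasing $\phi$, but it does not hand you this distribution.

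\emph{What the paper's route buys.} It is brief and covers every submodular functional of the coalition's bids, not only functions of the maximum. The price is reliance on a nontrivial cited theorem. It also leaves implicit that submodularity survives the increasing coordinatewise maps $\varepsilon_i\mapsto g_{it}+\sigma\varepsilon_i$, a step your reduction to the marginals $G_i$ makes explicit.

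Your bookkeeping points are the right ones: integrability, the left-continuous quantile, $\sigma>0$ for the affine map, and independence of the coupling from $g_{it}$ when conditioning.
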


\begin{proof}
The coordinate maximum is submodular, and the comonotone coupling is the maximum of
the Fr\'echet class in the supermodular order, hence it minimises the expectation of
a submodular function \citep{muller2002comparison,puccetti2015extremal}.
\end{proof}

\paragraph{From the seller's loss to the coalition's gain.} Lowering
$\mathbb{E}[\max_{i \in S} b_{it}]$ is a statement about what the seller receives, and
it does not by itself imply the coalition captures the difference. With non-members
present, a coalition that bids lower also wins less often, and the seller's loss can
leak to an outsider instead. The step is therefore conditional, and the condition is
dispersion. If the bidders the coalition faces have no unexplained dispersion of their
own, the coupling only trades win probability against a deterministic threshold and can
leave the coalition worse off. In the regime this paper is about, where every bidder
carries unexplained dispersion, the two move together: across coalition sizes $2$ to
$5$, zero to six outsiders, and $\sigma$ from $0.05$ to $0.20$, coupling at $\rho=0.9$
raises coalition surplus in every configuration we checked while lowering seller
revenue, under a paired counterfactual in which $\rho=0$ returns exactly zero. The
quantity we measure on market data throughout is the seller-side loss, which is the
regulator's object in any case.

This corrects a natural but wrong intuition. Antithetic turn-taking, with members
cycling through who bids aggressively, is anti-profitable: for uniform marginals and
two bidders the expected maximum is $1/2$ comonotone, $2/3$ independent and $3/4$
antithetic. What costs a coalition money is the dispersion of its joint bids, so the
profitable move is to bid alike rather than to take turns.

\begin{figure}[t]
\centering
\input{figures/fig2_power}
\caption{Marginal audits have power equal to their size. Pairwise tests on the same
data reach power $1$, and the coupling strength they can resolve improves like
$N^{-1/2}$. Single-agent tests stay on the $\alpha=0.05$ line at every coupling
strength and every sample size. Monte Carlo, 200 trials per cell, 8 bidders, 2
conspirators.}
\label{fig:power}
\end{figure}

\paragraph{Simulation.} Six distributional statistics on one bidder's residuals,
spanning shape, serial dependence and spectrum, separate collusive from honest markets
at AUC $0.50$ to $0.55$ at $N$ up to $10^4$, and a Bonferroni battery of them has power
$0.020$ at $\rho = 1.0$ and $N = 2\times10^{4}$ against nominal $\alpha = 0.05$
(Figure~\ref{fig:power}). The pairwise test on the same data reaches power $1$.

\section{Why the Empirical Literature Cannot See This}

\begin{proposition}
\label{prop:blind}
Call a \emph{price-level test} any test whose input is the marginal distribution of
one agent's price or bid, compared against a competitive benchmark. Every such test
has power exactly $\alpha$ against a conduct satisfying
Lemma~\ref{lem:construction}.
\end{proposition}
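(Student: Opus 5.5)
The plan is to reduce Proposition~\ref{prop:blind} to Proposition~\ref{prop:marginal}, or more directly to the same invariance argument. First I will make the class of tests precise. A price-level test $A$ is a measurable map from the data available about one agent $i$ to $\{0,1\}$. Those data are either the empirical distribution of agent $i$'s bids or prices, or a statistic of it, together with a benchmark. The benchmark is a fixed competitive reference law, or one computed from public covariates $x_t$ alone. Either way it does not depend on other agents' unexplained components. Size $\alpha$ means $\Pr_{\Ecomp}(A \text{ rejects}) = \alpha$, or $\le \alpha$, in which case the conclusion becomes equality with the same size.

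Second, I will show that the input to $A$ has the same law under $\Ecomp$ and under any conduct $P$ from Lemma~\ref{lem:construction}. By that lemma each $\varepsilon_{it}\sim F_i$ exactly, for every $\rho\in[0,1]$. Because the shared draws $z_t$ are refreshed independently each round, the sequence $(\varepsilon_{it})_t$ is i.i.d.\ $F_i$ under $P$, exactly as under $\Ecomp$; this is the argument already used in the proof of Proposition~\ref{prop:marginal}. The bid is $b_{it}=g_{it}+\sigma\varepsilon_{it}$, and $g_{it}$ is a function of round effects, bidder effects and public covariates that the conduct does not alter. So the law of $\{(b_{it},x_t)\}_t$ is identical under the two hypotheses. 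The empirical marginal of $b_{i\cdot}$, and any summary of it, is a measurable function of this history, so its law is identical as well. The benchmark is computed either without agent-specific residuals or from the competitive model, so it is invariant too.

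Third, the rejection event $\{A=1\}$ is the preimage of a measurable set under a map whose input law is the same under both hypotheses, so $\Pr_P(A\text{ rejects})=\Pr_{\Ecomp}(A\text{ rejects})=\alpha$ for every $\rho$ up to $1$. If the test reads the \emph{population} marginal rather than a sample, the same conclusion holds trivially: the input is literally $F_i$ shifted by $g$ under both hypotheses, so the test is a constant decision.

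The only step I expect to need care is the benchmark. If the competitive benchmark were estimated from other agents' data jointly with agent $i$'s, the input would no longer be a single-agent object. Cross-agent dependence could then, in principle, enter through the joint law of (agent $i$'s data, benchmark). I will handle this by stating that the definition requires the benchmark to be fixed or public-data-determined, and by noting that this covers the methods surveyed. A benchmark built from other agents would turn the procedure into a pairwise test, which lies outside the proposition, consistent with Figure~\ref{fig:power}.
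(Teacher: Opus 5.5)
Your proposal is correct and follows the paper's route. The paper treats Proposition~\ref{prop:blind} as a corollary of Proposition~\ref{prop:marginal}, whose proof is your invariance argument: because $z_t$ is refreshed each round, each agent's whole history is i.i.d.\ $F_i$ under the construction, so any measurable rejection event has the same probability under both hypotheses. Your explicit requirement that the competitive benchmark be fixed or determined by public data alone is a precision the paper leaves implicit, since a benchmark estimated from coalition members' bids would make the test pairwise and break the invariance.
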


Proposition~\ref{prop:blind} is a corollary of Proposition~\ref{prop:marginal}, and
its force comes from a fact about the literature rather than about mathematics. The
empirical detection methods we are aware of in algorithmic collusion ask whether
observed prices are supracompetitive. That is a question about a marginal. A conduct satisfying
Lemma~\ref{lem:construction} produces no supracompetitive price in any agent's
marginal, at any sample size, so the entire approach is blind to it by construction.

The magnitudes are what make this more than a curiosity. The price-level literature
reports lifts of order 22\%, falling to 7 to 10\% under heterogeneity
\citep{calvano2020ai,heterogeneity2026}. Measured below on real relay data, a
five-member coalition takes $4.6\%$ of seller revenue at the lower end of the cleaned
ambient floor and $10.6\%$ at the uncleaned one. These are the same order of magnitude,
and one is visible to the standard tests while the other is not.

\paragraph{The dependence does not have to be assumed.} It is documented.
\citet{kleinberg2021monoculture} establish that shared algorithms degrade collective
outcomes absent any correlated shock, and \citet{oraclefingerprint} measure
correlated errors at $r = 0.77$ to $0.78$ across three independently developed
models on forecasting tasks. What remains to establish empirically is not existence
but rate: how extractable rent varies with a parameter a regulator could set.

\paragraph{Regret-based audits do not separate it either.} The blindness is not
specific to price-level tests, though the reason is different and worth stating
carefully. The audits of \citet{hartline2024auditing} and \citet{hartline2025auditing}
read one agent's transcript and test its calibrated (swap) regret, calling an agent
non-collusive when no systematic remap of its own realised bids would have paid better.
A Lemma~\ref{lem:construction} conspiracy is \emph{not} exempt on best-response grounds:
it is not a correlated equilibrium, and it does leave a suppressed deviation.

\begin{proposition}[Not an equilibrium, but no excess regret]
\label{prop:regret}
Fix marginals with unexplained scale $\sigma>0$ and let $R(\rho)$ be the calibrated
swap regret of a Lemma~\ref{lem:construction} conduct at coupling $\rho$. Then
\emph{(i)} for every $\rho\in(0,1]$ the conduct is not a best response to its own shared
draw, so $R(\rho)>0$; in the comonotone limit the best response loads half the common
shift, $c^\ast=\tfrac12 v+\tfrac12\sigma z$, where the conduct bids $\tfrac12 v+\sigma
z$. \emph{(ii)} Best response to the shared draw contracts to the competitive
equilibrium, which earns zero rent and carries the competitive marginal, so no conduct
is at once a correlated equilibrium, profitable, and equal in marginal to $F_i$.
\emph{(iii)} The competitive conduct at the same marginals already carries regret
$R_{\mathrm{comp}}(\sigma)$ of order $\sigma^2$, and $R(\rho)\le R_{\mathrm{comp}}$ with
$R$ decreasing in $\rho$.
\end{proposition}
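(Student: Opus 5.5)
The proof needs a concrete model in which best responses can be computed. I will fix one and work in it. Take the symmetric first-price setting that the formula $c^\ast=\tfrac12 v+\tfrac12\sigma z$ points to. Values $v$ are i.i.d.\ uniform on $[0,1]$. Two bidders compete, and the competitive bid is $\tfrac12 v$; with unexplained scale $\sigma$ the conduct bids $b=\tfrac12 v+\sigma\varepsilon$, where $\varepsilon=\sqrt{\rho}z+\sqrt{1-\rho}e$ as in Lemma~\ref{lem:construction}. Calibrated swap regret is the expected gain from the best measurable remap $\phi$ of the agent's own signal $(b,z)$, holding the opponent's conduct fixed. Write $R(\rho)=\sup_\phi \mathbb{E}[u(\phi(\cdot))-u(b)]$. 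Parts (i)--(iii) then reduce to first-order conditions for this bounded-uniform problem, conditioned on $z$.

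For (i), I condition on $z$. Given $z$, the opponent's bid is $\tfrac12 v'+\sigma\sqrt\rho z+\sigma\sqrt{1-\rho}e'$. The own bid is shifted by the full $\sigma\sqrt\rho z$, so the first-order condition on $\Pr(\text{win}\mid c)\,(v-c)$ fails whenever $\rho>0$. The interior maximiser responds to the opponent's shift with slope strictly less than one. Hence the remap $b\mapsto c^\ast(v,z)$ is a strict improvement on a positive-measure set, which gives $R(\rho)>0$. In the comonotone limit $\rho=1$, the opponent bid is $\tfrac12 v'+\sigma z$. Maximising $(v-c)\Pr(\tfrac12 v'+\sigma z<c)=(v-c)\cdot 2(c-\sigma z)$ gives $c^\ast=\tfrac12 v+\tfrac12\sigma z$, which is the stated formula. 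This is a one-line calculus check, up to handling the boundary of the support.

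For (ii), I iterate best responses. If both agents load a fraction $\lambda$ of the common shift, the best response loads $\lambda/2$, by the calculation above. The map $\lambda\mapsto\lambda/2$ contracts to $\lambda=0$. The idiosyncratic noise is handled the same way: in the uniform first-price game, the best response to a noisy opponent is the equilibrium bid $\tfrac12 v$ plus a correction that vanishes at the fixed point. So the only correlated equilibrium in this family is the competitive $\tfrac12 v$ profile. It earns the competitive payoff, which means zero rent relative to $\Ecomp$. Since profitability requires $\lambda>0$, no conduct is simultaneously an equilibrium, profitable, and marginal-matched to $F_i$.

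Part (iii) is where I expect the main obstacle. Under competitive conduct with noise $\sigma e$, the agent's deviation from $\tfrac12 v$ is pure noise. The best remap can partially undo it: conditioning on $b$ recovers part of $e$, and the swap can also exploit the $\sigma$-perturbed opponent. A second-order expansion of the payoff around the equilibrium bid gives a gain of order $\sigma^2$, because the payoff loss from a bid error $\delta$ is quadratic in $\delta$ at an interior optimum. This yields $R_{\mathrm{comp}}=\Theta(\sigma^2)$.

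For monotonicity, write the $\rho$-conduct's regret by the same expansion as $\sigma^2$ times a coefficient. That coefficient has two pieces:
\begin{itemize}
\item the idiosyncratic part, proportional to $1-\rho$;
\item the common part, in which the opponent moves with $z$, so that only the half-load mismatch is exploitable.
\end{itemize}
I would compute both pieces explicitly and show that their sum is decreasing in $\rho$ and equals $R_{\mathrm{comp}}$ at $\rho=0$. If the closed form becomes messy, the fallback is to verify the sign of $dR/d\rho$ from the quadratic expansion only, to leading order in $\sigma$. That would establish $R(\rho)\le R_{\mathrm{comp}}$ in the small-dispersion regime, which is the regime the paper calibrates to.
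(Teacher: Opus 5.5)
Your parts (i) and (ii) follow the paper's own argument: the same first-order condition on $(v-c)\cdot 2(c-\sigma z)$ at $\rho=1$, and the same contraction of best-response iteration on the loading of the common shift, run only inside the linear family. Part (iii) is where you depart from the paper. The paper does not prove (iii): it checks the inequality numerically with a cross-fitted estimator and leaves it as a conjecture in general. An analytic argument would therefore add something, but yours has a gap that decides the sign of the result, because you never fix what the swap map may condition on.

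Under your literal definition, remaps of $(b,z)$, the bid carries essentially no information about $e$. With $v$ uniform and $\sigma$ small, $\mathbb{E}[v\mid b,z]=2(b-\sigma\sqrt\rho\,z)$ outside an $O(\sigma)$ boundary layer. The best remap is then $b\mapsto b-\tfrac12\sigma\sqrt\rho\,z$. This does prove (i), but it gives $R(\rho)\approx\tfrac12\rho\sigma^2$, while $R_{\mathrm{comp}}$ comes only from the boundary layer and is $O(\sigma^3)$. So $R$ increases in $\rho$ and sits above the competitive level, the reverse of (iii).

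Your quadratic-loss heuristic for $R_{\mathrm{comp}}=\Theta(\sigma^2)$ measures the loss against the full-information best response. A swap recovers that loss only if it can identify the bid error, which means it must see $v$. Your two-piece decomposition (an idiosyncratic part $\propto 1-\rho$ plus the half-load mismatch) is exactly the computation for information $(v,b,z)$. There it gives $R(\rho)=2\sigma^2\bigl[(1-\rho)+\tfrac14\rho\bigr]=2\sigma^2(1-\tfrac34\rho)$, not the quantity your stated definition defines.

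The repair is to put the value into the audit's information set explicitly. The choice that matches the paper is $(v,b)$, equivalently $(v,\ell)$ with $\ell=\sqrt\rho\,z+\sqrt{1-\rho}\,e$, since an auditor without the key cannot compute $z$. This is what the paper's fitted best-response loadings of about $\rho/2$ on the latent reflect.

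No expansion is needed then. In the interior the win probability is exactly linear, $\Pr(\text{win at }c\mid\ell)=2(c-\rho\sigma\ell)$, so the payoff is exactly quadratic. This gives $c^\ast=\tfrac12 v+\tfrac12\rho\sigma\ell$ and a bid error of $(1-\tfrac{\rho}{2})\sigma\ell$. Hence $R(\rho)=2\sigma^2(1-\tfrac{\rho}{2})^2$, up to $O(\sigma^3)$ boundary terms. This equals $R_{\mathrm{comp}}=2\sigma^2$ at $\rho=0$ and decreases on $[0,1]$.

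As a check, scale the paper's reported $\rho=0$ deviation gain of $0.0193$ by $(1-\rho/2)^2$. That gives $0.0139$, $0.0095$, $0.0058$, $0.0049$, against the reported $0.0143$, $0.0097$, $0.0058$, $0.0047$.

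Even repaired, this proves (iii) only for two bidders, uniform values and a Gaussian latent at small $\sigma$. For general marginals, outsiders and larger coalitions, the claim stays where the paper leaves it: verified rather than proved.
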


\begin{proof}[Proof of (i), $\rho=1$]
A rival playing Lemma~\ref{lem:construction} bids $\tfrac12 v'+\sigma z$ with $v'\sim
U[0,1]$, so $\Pr(\text{win at }c\mid z)=2(c-\sigma z)$ on the interior. Maximising
$(v-c)\cdot 2(c-\sigma z)$ gives $c^\ast=\tfrac12(v+\sigma z)$, which differs from the
prescribed bid by $\tfrac12\sigma z\neq 0$ on a positive-measure set, so the regret is
strictly positive. Part (ii) follows because the best-response loading on the shared
shift is a contraction to zero. Part (iii) is verified by a cross-fitted estimator
across coalition sizes, outsiders, noise scales and value laws (scripts released), and
left as a conjecture in general.
\end{proof}

Part (iii) is what saves the conduct from the audit, and it is a statement about the
honest market, not about the conspiracy. Because the unexplained component is bid
dispersion the auditor cannot rationalise, the competitive conduct itself carries
positive calibrated regret, and the copula does not \emph{raise} regret above that
level: under a bias-corrected (cross-fitted) estimator the excess is nonpositive and
falls with $\rho$. A regret audit calibrated to pass the ambient noisy-but-honest
market therefore also passes the conspiracy, and cannot separate the two. The
construction thus evades price-level and regret-based audits alike, but for distinct
reasons: the former because the marginal is untouched, the latter because the coupling
hides inside the market's own irreducible dispersion. This second claim is conditional:
it holds when that dispersion is off-best-response noise, which the measured $\kappa$
makes plausible without proving.

\section{Real Language-Model Agents}

\begin{figure}[t]
\centering
\input{figures/fig3_temperature}
\caption{Coupling is base-specific and weakens with sampling temperature. Twenty models
from nineteen independent developers, three deployment prompts each, 200 rounds per
temperature. Two deployments of one model couple; two different models do not. The
solid line is the strict auditor, which sees every order feature and is fitted out of
sample, with a band clustered by developer; the dashed line is the same quantity under
an auditor seeing only two features. Dots give the nineteen per-developer means at
$T=0.3$, one of which is negative. The decline in temperature holds under every
auditor.}
\label{fig:temperature}
\end{figure}

We instantiate the construction with language-model agents bidding in a first-price
auction over synthetic orders. Twenty models from nineteen independent developers are
each deployed under three different prompts, giving sixty agents, run for 200 rounds at
each of five sampling temperatures. A pair is \emph{same-base} when its two agents are
two deployments of one model, and dependence is measured between pairs of residual bid
series after regressing out the order features an auditor is assumed to see. Because
that assumption does the work, we report every quantity under three auditors of
increasing strength: (A) linear in two of the six order features, (B) linear in all
six, and (C) all six with squares and interactions, fitted out of sample by five-fold
cross-fitting. Confidence intervals are clustered by developer, the level at which
agents are independent.

Same-base pairs couple and cross-base pairs do not, and the contrast survives the
strongest auditor. At temperature $0.3$ the sixty same-base pairs correlate at $+0.110$
under auditor (A), $+0.059$ under (B) and $+0.053$ under (C), with clustered $95\%$
interval $[0.030, 0.078]$ in the last case, while the $1710$ cross-base pairs sit at
$+0.023$, $+0.000$ and $+0.0001$. Attenuation across the ladder is expected, since a
richer auditor absorbs the part of the coupling that is common response to observable
order features; what does not attenuate is the asymmetry, because a fully specified
auditor drives cross-base dependence to zero and leaves same-base dependence
significantly positive. The coupling is base-specific in the way the shared-origin
account predicts, and it varies across developers, from $+0.17$ down to $-0.05$ under
auditor (C), so the mechanism is real on average rather than universal.

\paragraph{Temperature governs the coupling.} Sampling temperature is the agents' own
randomisation, and randomisation is what breaks a copula. Same-base correlation falls
monotonically with it, from $+0.147$ to $+0.028$ under auditor (A) and from $+0.070$ to
$-0.004$ under (C), as bid dispersion rises from $10.1$ to $18.1$. Permuting temperature
labels within developer over all pair-by-temperature observations, the mean within-family
slope is negative under every auditor, at $p = 5\times10^{-5}$ for (A), $p = 0.003$ for
(B) and $p = 0.002$ for (C). Temperature is a deployment parameter a regulator can
mandate, and raising it weakens the dependence the conduct requires. We do not read
rent off this ladder: the rent counterfactual permutes residuals, so its scale moves
with how the bid is decomposed, and only the coupling is comparable across auditors.

\paragraph{A contrast that sharpens the mechanism.} \citet{oraclefingerprint} find
cross-family correlation of $0.77$ to $0.78$ on forecasting, where ours in bidding is
indistinguishable from zero under a fully specified auditor. Forecasting has a ground truth that all competent models approach, so
they correlate through it, whereas bidding has none and idiosyncratic strategy
dominates. The coupling here is therefore base-specific rather than generic model
similarity, which is what the shared-origin account predicts.

\section{Market Calibration}

Classical asymptotics say the smallest detectable coupling shrinks like $N^{-1/2}$, so
a regulator need only audit for longer. A Fisher-$z$ test of $H_0 \colon \rho = 0$,
Bonferroni-corrected over all $P = \binom{M}{2}$ pairs, retains power $1/2$ down to
$\rho^{\ast}(N) \approx \sqrt{2\log(M^2/\alpha)/N}$, and since $\rho$ has finite
positive Fisher information at $0$ no level-$\alpha$ test does better than the
$N^{-1/2}$ rate. In simulation $\rho^\ast\sqrt{N}$ is flat at $3.10$ to $3.12$ for $N$
from $200$ to $2\times10^4$, matching the predicted constant. That reasoning fails in a
real market, and the next proposition says why.

\begin{figure}[t]
\centering
\input{figures/fig4_floor}
\caption{More data does not lower the detection floor. Observed 95th-percentile
pairwise residual correlation stays flat as the number of shared rounds grows, while
the permutation null collapses, and the same holds when rounds are subsampled within a
fixed set of pairs. The cleaned cross-operator floor is 20 to 32 times the family-wise
sampling threshold at $N = 2\times10^{4}$. Ethereum MEV-Boost relay
data, 24 days, 77{,}684 bids, 39 builders.}
\label{fig:floor}
\end{figure}

\begin{proposition}[Ambient floor]
\label{prop:floor}
Let $\mathcal{H}$ be the lawful-or-unattributed population of bidder pairs and
$F_{\mathcal{H}}$ the distribution of the pairwise dependence statistic over it. A
screen holding its per-pair false-positive rate at $\alpha$ on that population requires
threshold $\tau_\alpha = F_{\mathcal{H}}^{-1}(1-\alpha)$; controlling the family-wise
rate over $P$ pairs requires the corresponding quantile of the maximum, which is larger
still. Neither depends on $N$. Consequently the smallest detectable dependence is
bounded below by a constant and covert rent by $c \cdot \tau_\alpha > 0$, independent
of the audit window.
\end{proposition}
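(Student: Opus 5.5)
The plan is to treat the statement as three claims and prove them in order: the threshold identity, the independence from $N$, and the lower bound on detectable dependence and covert rent. Throughout, I model the screen as a rule that flags pair $(i,j)$ when its statistic $D_{ij}$ exceeds $\tau$. The honest population $\mathcal{H}$ enters as a random draw of a pair from $\mathcal{H}$, with $D$ distributed as $F_{\mathcal{H}}$.

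\textbf{Threshold identity.} The per-pair false-positive rate is $\Pr_{F_{\mathcal{H}}}(D>\tau)=1-F_{\mathcal{H}}(\tau)$. Holding this at most $\alpha$ forces $\tau\ge F_{\mathcal{H}}^{-1}(1-\alpha)$, by the standard quantile inversion; I would use the generalised inverse and right-continuity to handle atoms. For the family-wise claim I apply the same argument to $\max_{p\le P}D_p$. Its distribution function is pointwise at most $F_{\mathcal{H}}$, since the maximum dominates each coordinate. Its $(1-\alpha)$-quantile is therefore at least $\tau_\alpha$, which gives ``larger still'' with no independence assumption.

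\textbf{Independence from $N$.} This is the substantive step and the main obstacle, because it is only true under a modelling assumption that I will state explicitly. Write each honest pair's statistic as $D_{ij}=\rho_{ij}+\eta_{ij}$. Here $\rho_{ij}$ is the population residual dependence of the pair, coming from shared infrastructure, common unmodelled signals, or lawful multi-identity operation. The term $\eta_{ij}$ is sampling error of order $N^{-1/2}$, as in the Fisher-$z$ calculation above. The assumption is that the $\rho_{ij}$ are fixed features of the market, not draws that vanish with the window. Then $F_{\mathcal{H}}$ converges weakly to the law $G$ of $\rho_{ij}$ over $\mathcal{H}$, and $\tau_\alpha(N)\to G^{-1}(1-\alpha)$. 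If $G$ is nondegenerate with positive upper tail, this limit is positive and $\tau_\alpha(N)$ is bounded below uniformly in $N$. I would state ``independent of $N$'' in this asymptotic-bounded-below sense rather than literally. I would also point to the flat empirical 95th percentile in Figure~\ref{fig:floor} as the evidence that the assumption holds in the relay data.

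\textbf{Lower bound on detectable dependence and rent.} A coalition whose $D$ stays below $\tau_\alpha$ is flagged with probability at most that of an honest pair at the same dependence level. By the threshold identity, that probability is no larger than the screen's false-positive budget, so the smallest detectable coupling is at least $\tau_\alpha$. For the rent I would argue monotonically. By Lemma~\ref{lem:construction} the coalition can choose any $\rho\in[0,1]$ with marginals fixed. By Proposition~\ref{prop:comonotone} and the supermodular-order argument behind it, seller loss is increasing in the Gaussian-copula parameter, and it is strictly positive for $\rho>0$ when $\sigma>0$. The coalition then sets $\rho$ just below $\tau_\alpha$ and extracts rent $r(\tau_\alpha)$.

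To reach the linear form $c\cdot\tau_\alpha$, I would linearise $r$ at $0$: the derivative of the expected maximum with respect to the copula correlation is negative and finite for smooth $F_i$. Concavity, or compactness on $[0,\tau_\alpha]$, then supplies a constant $c>0$ with $r(\rho)\ge c\rho$. This yields a covert rent of at least $c\,\tau_\alpha>0$ for every audit window.
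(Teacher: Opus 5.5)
Your proposal is correct in substance and matches the skeleton the paper relies on. The paper, however, writes no proof of this proposition. It treats $F_{\mathcal{H}}$ as a population object, so that ``$\tau_\alpha$ contains no $N$'' holds by definition. It backs that reading empirically with the within-pair subsampling: on a fixed set of 55 pairs the permutation null falls from $0.153$ to $0.028$, while the observed floor moves only from $0.910$ to $0.893$. It reads the rent off the paired counterfactual of Table~\ref{tab:dichotomy} without ever deriving $c$. Your decomposition $D_{ij}=\rho_{ij}+\eta_{ij}$ adds three things:
it makes explicit the stationarity assumption that carries all the weight;
it turns ``independent of $N$'' into the provable statement $\liminf_N\tau_\alpha(N)\ge G^{-1}(1-\alpha)$;
it identifies the operative condition $G^{-1}(1-\alpha)>0$, which is the non-degeneracy condition of the paper's Corollary.
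You also give an analytic route to $c>0$ where the paper has only numbers. Only the $\liminf$ inequality is needed, which is just as well. The convergence $\tau_\alpha(N)\to G^{-1}(1-\alpha)$ you assert can fail when $G$ is flat at level $1-\alpha$, as happens for a finite population whenever $1-\alpha=k/|\mathcal{H}|$.

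One step fails as written. The threshold identity bounds the flag rate averaged over $\mathcal{H}$, not the flag probability of a pair at a given level. Under your decomposition, a pair at population level $\rho$ is flagged with probability $\Pr(\eta>\tau_\alpha(N)-\rho)$. This exceeds $\alpha$ whenever $\rho$ lies within order $N^{-1/2}$ below the threshold, and it approaches $\tfrac12$ as $\rho\uparrow\tau_\alpha(N)$ for median-zero $\eta$. So ``that probability is no larger than the false-positive budget'' is false. The lower bound should come from the same expression read the other way. For fixed $\rho<\liminf_N\tau_\alpha(N)$ it tends to zero, and it is at most $\tfrac12$ once $\tau_\alpha(N)>\rho$; that is the power-$\tfrac12$ convention the paper uses for $\rho^\ast(N)$.

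The rent step needs three repairs.
\begin{itemize}
\item Proposition~\ref{prop:comonotone} only pins down the comonotone endpoint. Monotonicity of seller loss along the Gaussian path needs a separate comparison result: normal vectors with equal variances and entrywise larger correlations are larger in the supermodular order. You also need that the increasing maps $F_i^{-1}\circ\Phi$ preserve that order, and that averaging over independent non-members keeps revenue submodular.
\item Concavity is the wrong shape for linearising at $0$. For two centred normal margins of scale $\sigma$, $\mathbb{E}\max=\sigma\sqrt{(1-\rho)/\pi}$, so $r$ is convex and the tangent line gives $r(\rho)\ge r'(0)\rho$ directly. Failing that, take $c=\inf_{\rho\in(0,1]}r(\rho)/\rho$, which is positive given $r'(0)>0$ and $r>0$ on $(0,1]$, so $c$ does not depend on $\tau_\alpha$.
\item By Gaussian maximal correlation, the Pearson correlation of the transformed residuals never exceeds the latent $\rho$. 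The coalition can therefore push the copula parameter up to $\tau_\alpha$, and beyond it for non-Gaussian $F_i$, while its measured $D$ stays below the threshold. Equating $D$ with $\rho$ only makes your bound conservative.
\end{itemize}
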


\begin{corollary}
The $N^{-1/2}$ law binds only when $\tau_\alpha \lesssim \rho^{\ast}(N)$, that is
when the honest population is nearly degenerate at independence. Otherwise the
constant binds.
\end{corollary}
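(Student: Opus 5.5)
The plan is to prove the corollary by comparing two lower bounds on the threshold a screen must use, and checking which one dominates. A pairwise screen at per-pair false-positive rate $\alpha$ (or family-wise over $P$ pairs) must clear two separate obstacles at once. The first is sampling noise around true independence, which by the Fisher-$z$ calculation stated just before Proposition~\ref{prop:floor} forces a threshold of order $\rho^{\ast}(N)\approx\sqrt{2\log(M^2/\alpha)/N}$. The second is the spread of the honest dependence population itself, which by Proposition~\ref{prop:floor} forces a threshold of at least $\tau_\alpha=F_{\mathcal{H}}^{-1}(1-\alpha)$, independent of $N$. I would therefore write the effective detection threshold as roughly $\max\{\tau_\alpha,\rho^{\ast}(N)\}$ and read both clauses of the corollary off this maximum.

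The first step is to make the maximum precise. I would model the observed pairwise statistic for an honest pair as $\hat\rho_{jk}=\rho_{jk}+\eta_{jk}$. Here $\rho_{jk}\sim F_{\mathcal{H}}$ is the pair's true ambient dependence, and $\eta_{jk}$ is sampling error of scale $N^{-1/2}$, approximately normal after the Fisher-$z$ transform. The $(1-\alpha)$-quantile of this sum is bounded below by each summand's contribution separately.
\begin{itemize}
\item If $F_{\mathcal{H}}$ has mass $\alpha$ above $\tau_\alpha$, then $\eta$ is symmetric about zero, so the sum exceeds $\tau_\alpha$ with probability at least about $\alpha/2$. The threshold must therefore be at least of order $\tau_\alpha$.
\item Conditioning on $\rho_{jk}\ge 0$, or on its median, the sampling term alone forces a threshold of order $\rho^{\ast}(N)$.
\end{itemize}
Since both lower bounds hold, the threshold is at least their maximum, up to constants. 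In the other direction, the $(1-\alpha)$-quantile of the sum is at most the sum of slightly adjusted quantiles of the two parts, which gives a matching upper bound of order $\tau_\alpha+\rho^{\ast}(N)$.

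The second step is the case split.
\begin{itemize}
\item When $\tau_\alpha\lesssim\rho^{\ast}(N)$, the threshold is $\Theta(\rho^{\ast}(N))$ and the classical $N^{-1/2}$ law governs. Because $\tau_\alpha$ is a fixed $(1-\alpha)$-quantile while $\rho^{\ast}(N)\to0$, this regime can persist as $N$ grows only if $\tau_\alpha\le 0$, or more generally $\tau_\alpha=o(N^{-1/2})$ along the audit windows considered. That means $F_{\mathcal{H}}$ puts mass at least $1-\alpha$ at or below a vanishing level, which is the sense of \emph{nearly degenerate at independence}.
\item Otherwise, for all large $N$ we have $\rho^{\ast}(N)<\tau_\alpha$, the maximum equals $\tau_\alpha$, and the constant binds. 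This is also when the covert-rent bound $c\cdot\tau_\alpha$ from Proposition~\ref{prop:floor} applies.
\end{itemize}

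The main obstacle is that the informal symbol $\lesssim$ has to be given content without overclaiming. The sampling noise and the ambient spread combine by convolution, not by a literal maximum, so the two regimes have to be stated up to constant factors. I would first try a sandwich bound, $\max\{\tau_\alpha,c_1\rho^{\ast}(N)\}\le\text{threshold}\le\tau_{\alpha/2}+c_2\rho^{\ast}(N)$, using a union bound for the upper side. Stating the corollary in terms of this sandwich keeps it honest. For the family-wise version, I would replace $\tau_\alpha$ with the corresponding quantile of the maximum over $P$ pairs, which Proposition~\ref{prop:floor} already notes is larger, so the constant regime only becomes more prevalent.
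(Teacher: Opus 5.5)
Your route is the paper's. The paper gives no separate proof: it reads the corollary straight off Proposition~\ref{prop:floor} and the Fisher-$z$ rate stated before it. No admissible threshold lies below the $N$-free $\tau_\alpha$, the sampling threshold $\rho^{\ast}(N)$ shrinks like $N^{-1/2}$, and the larger of the two binds; that is exactly your max-of-two-bounds argument with its case split. Where you go beyond the paper, in the sandwich you propose as the formal version, the quantile levels slip. A shift of quantile level is not a constant that $\lesssim$ can absorb.

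Write $T_\alpha$ for the $(1-\alpha)$-quantile of $\hat\rho=\rho+\eta$ over honest pairs and $q_u(\eta)$ for the $u$-quantile of $\eta$.

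\textbf{The lower side.} Symmetry gives only $\Pr(\hat\rho>\tau_\alpha)\ge\alpha/2$. That bounds the level-$\alpha/2$ threshold, so it yields $T_\alpha\ge\tau_{2\alpha}$, not $T_\alpha\ge\tau_\alpha$. The two can differ badly. Put mass $1.5\alpha$ of $F_{\mathcal H}$ at $0.8$ and the rest at $0$. Then $\tau_\alpha=0.8$ and $\tau_{2\alpha}=0$, while $T_\alpha\approx 0.8+q_{1/3}(\eta)<0.8$. So the lower side $\tau_\alpha\le T_\alpha$ of your sandwich is false, and the bound your argument actually proves is vacuous, even though the conclusion holds up to $O(N^{-1/2})$.

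\textbf{The upper side.} Your upper bound carries $\tau_{\alpha/2}$, while your regime condition is on $\tau_\alpha$. So $T_\alpha=\Theta(\rho^{\ast}(N))$ does not follow from $\tau_\alpha\lesssim\rho^{\ast}(N)$.

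\textbf{The fix.} Spend a free level $\delta\in(0,\alpha)$ on the noise on both sides, using $\{\rho>a\}\subseteq\{\hat\rho>a+b\}\cup\{\eta\le b\}$ and $\{\hat\rho>a+b\}\subseteq\{\rho>a\}\cup\{\eta>b\}$:
\[
\tau_{\alpha+\delta}-q_{1-\delta}(\eta)\;\le\;T_\alpha\;\le\;\tau_{\alpha-\delta}+q_{1-\delta}(\eta),
\qquad q_{1-\delta}(\eta)=O(N^{-1/2}).
\]
Then $T_\alpha\to\tau_\alpha$ wherever $F_{\mathcal H}^{-1}$ is continuous at $1-\alpha$, and the dichotomy becomes exact:
\begin{itemize}
\item If $\Pr_{\mathcal H}(\rho>\epsilon)>\alpha$ for some $\epsilon>0$, then $T_\alpha\ge\epsilon-O(N^{-1/2})$ and the constant binds.
\item If $\Pr_{\mathcal H}(\rho>0)<\alpha$ strictly, then $T_\alpha=O(N^{-1/2})$, and $\Theta(N^{-1/2})$ once your sampling lower bound applies. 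This is a precise reading of \emph{nearly degenerate at independence}.
\end{itemize}

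One smaller point. In this per-pair model the noise quantile is the per-pair sampling scale, not the Bonferroni $\rho^{\ast}(N)$. The ratio is free of $N$, so the rate dichotomy survives, but the paper explicitly asks that per-pair and family-wise criteria not be mixed.
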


\paragraph{Measurement setting.} We use Ethereum's MEV-Boost block-building auction
\citep{flashbots}, the only free, public and live market we know of in which both
sides of every quote carry a persistent identity. Builders are the bidders, proposers
the sellers, and the relay-published bid value is the bid. The panel spans 24 days,
77{,}684 bids, 5{,}408 slots and 39 builders. On it $\kappa = 0.895$, the ratio of
residual to explained standard deviation, so $44.5\%$ of cross-bidder dispersion is
left unexplained by public covariates.

The population is far from degenerate. The 95th percentile of observed pairwise
residual correlation is $+0.829$ at a 100-round overlap requirement and $+0.909$ at
1000, while the permutation null falls from $+0.130$ to $+0.050$, and to $+0.028$ at
3000. Raising that requirement also changes which pairs are scored, so on its own it
cannot separate sampling from composition. Holding the population fixed settles it:
among the $55$ pairs sharing at least $3000$ slots, subsampling rounds \emph{within}
those same pairs over a thirtyfold range moves the null from $+0.153$ to $+0.028$ while
the observed floor moves only from $+0.910$ to $+0.893$. On an identical set of pairs,
sampling error collapses and the dependence does not.

\paragraph{Cleaning the population.} Much of the raw floor is one operator running
several identities, which is lawful and common. Removing such pairs by the union of
two independent labellings, and deliberately excluding correlation-based clustering to
avoid circularity, leaves a floor of $+0.81$. Adding a submission-timing filter takes
it to $+0.50$. Measured against the sampling threshold for this market under the same
error criterion, that floor is $20$ to $32$ times the family-wise threshold and $36$ to
$58$ times the per-pair one; the two criteria must not be mixed, and neither may be read
off a different market size. Calibrating the family-wise threshold empirically instead,
from the permutation distribution of the largest pairwise correlation, puts it at
$+0.088$ against an observed 95th percentile of $+0.909$. We report a range rather than
a point because the timing filter does most of the work and rests on the premise that co-submission implies shared
infrastructure, which we assume rather than establish. If some co-submitting pairs
are genuine competitors, the lower end is biased downward.

\paragraph{Three repairs fail.} Entity resolution moves the floor only from $+0.916$ to
$+0.905$, though multi-identity operation is real: one operator runs six identities and
same-operator pairs sit at median dependence $+0.906$. Factor adjustment removing five
principal components carrying $85.8\%$ of variance moves the observed 95th percentile to
$+0.825$ while inflating the null from $+0.114$ to $+0.233$, losing power rather than
gaining validity. More data is ruled out by Proposition~\ref{prop:floor}, since
$\tau_\alpha$ contains no $N$.

\paragraph{The dichotomy, priced.} A coalition facing a submission-timing screen has
two options and both cost it. Remaining temporally separated leaves it screenable at
a floor of $+0.499$ rather than $+0.905$, which removes half its rent
(Table~\ref{tab:dichotomy}). Co-submitting to evade the timing screen makes it
present as one operator's identity cluster, forfeiting the claim to independent
competition, which is itself measurable as concentration.

\begin{table}[t]
\centering
\small
\begin{tabular}{lrrr}
\toprule
Coalition size & at $+0.905$ & at $+0.499$ & removed \\
\midrule
2 & 0.92\% & 0.46\% & 50.0\% \\
3 & 4.34\% & 1.81\% & 58.2\% \\
5 & 10.56\% & 4.57\% & 56.7\% \\
\bottomrule
\end{tabular}
\caption{Covert rent as a percentage of proposer revenue at the uncleaned and the
timing-screened floor, from a paired counterfactual on the archive panel at 103.2
ETH per day of proposer revenue.}
\label{tab:dichotomy}
\end{table}

The extractable share grows with the unexplainable dispersion $\kappa$ and with
coalition size, both larger in thinner or more concentrated markets than this one, so
the calibration is conservative rather than extreme.

\paragraph{The market already contains the signature.} The sharpest single
observation in the data needs no complete labelling. Five identity pairs bid at
correlation up to $+0.992$ while almost never taking top-of-book from each other, at
leapfrog rates of only $0.028$ to $0.060$. Each of those pairs belongs to a single
operator. This is exactly the signature that Lemma~\ref{lem:construction} produces,
occurring lawfully. A screen calibrated to catch the conspiracy would flag these
first.

\section{A Positive Result: Count Entities, Not Conduct}

Conspiracy and lawful multi-identity operation are not separated by the dependence
statistics available here, and we know of no behavioural statistic that separates
them, so a detection programme built on one is looking for a difference that may not
be in the data. What is decidable from public data is a different question: how many
independent decision-makers the market contains.

\paragraph{Detecting shared operation without labels.} Pairwise bid-residual
correlation recovers label-identified same-operator pairs at AUC $0.951$. Because the
24 positives come from only six operators, the honest interval is an
operator-clustered bootstrap, $[0.754, 0.993]$, and leave-one-operator-out gives
$0.926$ to $0.984$. The leave-one-pair-out envelope is far narrower but assumes an
independence the positives do not have. At threshold $0.80$ recall is $95.8\%$ at an
apparent false-positive rate of $5.4\%$. The detector survives the checks that matter:
permuting operator labels at fixed group sizes returns AUC $0.504$, so the pipeline does
not leak; restricting to pairs whose members both carry a label, which makes the
negative class genuinely negative, moves AUC \emph{up} to $0.957$; and overlap alone is
not the signal, detecting at AUC $0.434$. The value of a label-free detector is coverage
rather than accuracy, since the two labellings together reach only 24 of 48 builders
while correlation covers every bidder that submits.

\paragraph{Concentration depends on the counting rule.} Measured concentration on the
single-day archive panel moves substantially with how bidders are grouped. Over 40
bidding identities the Herfindahl index of bid leadership is $0.0655$. Grouping by
the union of the two operator labellings gives $0.2277$ over 23 entities, a rise of
$+247.5\%$. Adding behavioural clusters on top of the labels reaches $0.2782$ over 20
units, $+324.5\%$.

Behavioural clustering alone is the weakest of the three and is not a substitute for
labels. Its lift spans $+42\%$ to $+143\%$ across threshold and overlap settings, and
greedy complete linkage is order-dependent, giving $+63\%$ to $+189\%$ over tie-break
order alone, so it has no point estimate. Nor is it conservative in one direction: at
threshold $0.90$ it splits one operator's five identities while fusing two separately
labelled operators carrying $17.5\%$ of top bids. Against size-matched random groupings
the lift is still real ($p = 0.0008$), but the defensible claim is the combination,
since labels and behavioural clusters each find groupings the other misses. We report
bid leadership rather than realised market share, because the archive publishes relay
bids and not delivered payloads.

\paragraph{Dual use.} Lemma~\ref{lem:construction} is a recipe as well as an analysis.
The coupling it exploits arises by default from shared model supply rather than from
deliberate engineering, so the paper describes a condition markets are already in rather
than a capability they lack, and it confers no organisational advantage since it needs
neither communication nor agreement. The defensive content, that price-level evidence
cannot rule this conduct out and that entity counting is available from public data, is
actionable only if stated openly. We give no operational detail beyond what the
propositions require.

\paragraph{Reproducibility.} Market-side experiments run on CPU with numpy and scipy
alone, each an independent script writing a machine-readable result file, and every
market-side quantity plotted here is transcribed from one. Agents are queried through a
hosted inference API; relay and on-chain data come from public keyless endpoints. The
supplementary material holds every script, every result file, and the generated bids,
including the checks written solely to attack our own earlier results: the
coalition-surplus condition on Proposition~\ref{prop:comonotone}, the within-pair
subsampling separating sampling from composition, the family-wise threshold
calibration, and the process-law simulation behind
Proposition~\ref{prop:marginal}.

\section{Limitations}

\paragraph{The entity results are single-day, and the ceiling is structural.} The
labelling that grounds the detector and the concentration figures joins to builders
only through winning blocks, so a builder that never won cannot be labelled at all and
no amount of waiting fixes this. Pooling to a longer panel makes coverage worse, since
the wider feed samples few slots and rarely captures a winning block's hash, collapsing
the two labellings onto the same handful of winners. Every claim conditioning on
operator identity inherits this ceiling, against a labelling we cannot validate from a
third source.

\paragraph{The two halves are joined by argument, not measurement.} Block builders do
not run language models: the mechanism is demonstrated on language-model agents and the
ambient floor is measured on builders. We claim the combination as a calibration
showing that a deployed market sits where the mechanism would bite, not as one
experiment.

\paragraph{The lab rent is not comparable across auditors.} The rent counterfactual
permutes residuals, so its magnitude depends on how the bid is decomposed and moves by
an order of magnitude across the auditor ladder. We therefore report the coupling,
which is comparable, and treat the language-model experiment as evidence that the
mechanism is real rather than as an estimate of its size.

\paragraph{Thresholds are not derived.} The timing gap, the clustering thresholds and
the screen operating points come from monotone sweeps, reported as trade-off curves
rather than derived from a loss function.

\section{Conclusion}

The empirical study of algorithmic collusion asks whether prices are supracompetitive.
We have exhibited profitable conduct for which the answer is no by construction, shown
that every test of that form has power equal to its size against it, found that the
regret-based audits do not separate it either, replicated the mechanism in
language-model agents, and measured a deployed auction whose ambient dependence exceeds
the sampling threshold by a factor of 20 to 32. The two families fail for different
reasons: the marginal is untouched, and the coupling raises no regret above the honest
market's own. The consequence is not a better screen. It is that where lawful shared
infrastructure and conspiracy leave the same trace, the answerable question is how many
independent decision-makers a market holds, and that public data can answer.


\appendix
\setcounter{secnumdepth}{2}
\section*{Technical Appendix}
\noindent The material below records derivations and tables that the paper above states in compressed form. The paper is self-contained; nothing load-bearing is deferred here. Every table is regenerated by the named script in the code release.

\section{Auditor specifications for the language-model experiment}

Residual dependence is only defined relative to an auditor's model of the bid, so the
main paper reports the language-model result under three auditors of increasing
strength. All three regress each agent's bid series on order features and take
residuals; they differ in what the auditor is assumed to see and how flexibly it
fits.

\begin{description}
\item[(A)] Linear in two of the six order features (size, volatility).
\item[(B)] Linear in all six (size, volatility, urgency, complexity, competing
bidders, quoted spread).
\item[(C)] All six with squares and all pairwise interactions, 27 terms, ridge
penalty $10^{-3}n$, fitted by five-fold \emph{cross-fitting} so every residual is an
out-of-sample prediction error.
\end{description}

Auditor (C) is the strict one: it sees every observable an agent sees and cannot
manufacture dependence by overfitting, because residuals are held out. Table~\ref{tab:ladder}
gives the full ladder. Confidence intervals are $t$ intervals clustered by developer,
which is the level at which agents are independent; a same-base pair is two
deployments of one model, a cross-base pair spans two models.

\begin{table}[t]
\centering
\small
\footnotesize
\begin{tabular}{llrrr}
\toprule
Aud. & $T$ & same-base & clustered 95\% CI & cross \\
\midrule
(A) & 0.0 & $+0.147$ & $[+0.067, +0.228]$ & $+0.033$ \\
(A) & 0.3 & $+0.110$ & $[+0.069, +0.158]$ & $+0.023$ \\
(A) & 0.6 & $+0.066$ & $[+0.023, +0.102]$ & $+0.017$ \\
(A) & 1.0 & $+0.028$ & $[-0.004, +0.057]$ & $+0.016$ \\
(A) & 1.4 & $+0.028$ & $[-0.004, +0.062]$ & $+0.011$ \\
\midrule
(B) & 0.0 & $+0.084$ & $[-0.023, +0.168]$ & $+0.004$ \\
(B) & 0.3 & $+0.059$ & $[+0.029, +0.095]$ & $+0.000$ \\
(B) & 0.6 & $+0.033$ & $[-0.003, +0.059]$ & $+0.002$ \\
(B) & 1.0 & $+0.007$ & $[-0.023, +0.032]$ & $+0.002$ \\
(B) & 1.4 & $+0.006$ & $[-0.023, +0.032]$ & $+0.001$ \\
\midrule
(C) & 0.0 & $+0.070$ & $[-0.025, +0.154]$ & $+0.006$ \\
(C) & \textbf{0.3} & $\mathbf{+0.053}$ & $\mathbf{[+0.030, +0.078]}$ & $+0.0001$ \\
(C) & 0.6 & $+0.009$ & $[-0.025, +0.031]$ & $+0.002$ \\
(C) & 1.0 & $-0.008$ & $[-0.032, +0.014]$ & $+0.001$ \\
(C) & 1.4 & $-0.004$ & $[-0.024, +0.015]$ & $-0.001$ \\
\bottomrule
\end{tabular}
\caption{The auditor ladder. 20 models from 19 independent developers, three
deployment prompts each, 200 rounds per temperature, 60 same-base and 1710 cross-base
pairs. Attenuation from (A) to (C) is expected, since a richer auditor absorbs common
response to observable features. What survives is the asymmetry: under the strict
auditor cross-base dependence is driven to zero while same-base dependence stays
positive. Script \texttt{e31\_families\_wide.py}.}
\label{tab:ladder}
\end{table}

\paragraph{Why rent is not reported across the ladder.} The rent counterfactual
replaces each agent's residuals by an independent permutation of themselves and
remeasures the expected maximum bid. Its scale therefore depends on the residual
scale, which changes with the auditor: cross-fitted residuals carry out-of-sample
prediction error and are larger, so the same coupling produces a numerically larger
rent under (C) than under (A). The coupling is comparable across auditors and the
rent is not, which is why the main paper reports the coupling.

\paragraph{Temperature trend.} Temperature labels are permuted jointly within each
developer, developers independently, over all pair-by-temperature observations, using
only pairs present at all five temperatures ($20{,}000$ permutations). The mean
within-developer slope is $-0.093$ under (A), $-0.053$ under (B) and $-0.058$ under
(C), with two-sided $p$ of $5\times10^{-5}$, $0.003$ and $0.002$. A rank correlation
on the five temperature means would be uninformative here: with five points the exact
two-sided permutation $p$ at the observed rank correlation is $0.083$.

\paragraph{Roster.} Models were screened before the run with ten probes each per
deployment prompt at temperature $0$. Models were dropped for a numeric parse-failure
rate above 2\%, for degenerate bidding (a near-constant bid, which leaves no residual
variance), and for belonging to a lineage already represented, since two models from
one lineage add pairs but not independent clusters. The retained roster spans 19
developers; the worst retained parse-failure rate is 5.0\% and no model is degenerate
at any temperature.

\section{Proposition 2: from the seller's loss to the coalition's gain}

The comonotone coupling minimises the expected maximum of the coalition's bids at
fixed marginals, which is a statement about what the seller receives. Whether the
coalition captures that difference is a separate question, because with non-members
present a coalition that bids lower also wins less often.

Table~\ref{tab:surplus} settles it inside the paper's model by paired common random
numbers: values, the shared draw and every idiosyncratic draw are held fixed and only
the coupling changes, so $\rho=0$ returns exactly zero by construction (verified: the
paired difference is $0.0$ to machine precision).

\begin{table}[t]
\centering
\small
\begin{tabular}{rrrrrr}
\toprule
$|S|$ & out & $\sigma$ & $\Delta$ coalition & $\Delta$ seller & $\Pr(\text{win})$ \\
\midrule
2 & 0 & 0.10 & $+0.0395$ & $-0.0151$ & $1.000 \to 1.000$ \\
2 & 2 & 0.10 & $+0.0055$ & $-0.0034$ & $0.500 \to 0.487$ \\
2 & 4 & 0.10 & $+0.0025$ & $-0.0018$ & $0.334 \to 0.325$ \\
3 & 2 & 0.10 & $+0.0113$ & $-0.0070$ & $0.600 \to 0.579$ \\
5 & 2 & 0.10 & $+0.0221$ & $-0.0134$ & $0.715 \to 0.683$ \\
2 & 2 & 0.20 & $+0.0191$ & $-0.0119$ & $0.500 \to 0.462$ \\
5 & 4 & 0.20 & $+0.0445$ & $-0.0276$ & $0.556 \to 0.463$ \\
2 & 6 & 0.05 & $+0.0004$ & $-0.0003$ & $0.250 \to 0.248$ \\
\bottomrule
\end{tabular}
\caption{Coupling at $\rho=0.9$ against $\rho=0$, paired. Coalition surplus rises and
seller revenue falls in every configuration, even though the coalition's win
probability falls. $3\times10^{6}$ draws per cell. Script
\texttt{e32\_coalition\_surplus.py}.}
\label{tab:surplus}
\end{table}

\paragraph{Where the implication fails.} The condition is dispersion among the bidders
the coalition faces. If a non-member bids a fixed amount with no unexplained
dispersion, coupling only trades win probability against a deterministic threshold. In
that case both quantities fall: with two coalition members of value $1$ bidding
$U[0,1]$ against a non-member bidding a fixed $0.9$, coalition surplus falls from
$0.0093$ to $0.0050$ while seller revenue falls from $0.9097$ to $0.9050$. This case is
reproduced in the same script. It sits outside the regime the paper measures, in which
every bidder carries unexplained dispersion, but it is why the implication is stated
with its condition rather than asserted.

\section{Proposition 1: the whole process law, and its tightness}

Proposition~1 quantifies over a single bidder's entire history, not only its per-round
marginal, so it must be checked that the construction leaves no serial trace. It does
not, because the shared draw $z_t$ is independent across rounds: each bidder's own
sequence is i.i.d.\ $F_i$ for every $\rho$.

\begin{table}[t]
\centering
\small
\begin{tabular}{rrrr}
\toprule
$\rho$ & KS $p$ vs $F_i$ & lag-1 acf & lag-2 acf \\
\midrule
0.00 & 0.345 & $+0.0007$ & $+0.0019$ \\
0.30 & 0.739 & $-0.0015$ & $+0.0009$ \\
0.60 & 0.277 & $+0.0017$ & $-0.0029$ \\
0.90 & 0.139 & $-0.0002$ & $+0.0018$ \\
0.99 & 0.074 & $-0.0005$ & $-0.0010$ \\
1.00 & 0.203 & $+0.0014$ & $-0.0004$ \\
\bottomrule
\end{tabular}
\caption{One bidder's own series under the construction, $4\times10^{5}$ rounds,
$F_i$ exponential. The marginal is $F_i$ and the series is serially independent at
every coupling, so the whole single-bidder process law matches the competitive one.
Script \texttt{e35\_prop1\_process\_law.py}.}
\label{tab:processlaw}
\end{table}

\paragraph{Tightness.} The independence of the shared draw is what buys the result.
Replacing $z_t$ by a persistent AR(1) common signal of coefficient $\phi$ leaves the
bidder's own series autocorrelated, and a single-bidder test then has power:

\begin{center}
\small
\begin{tabular}{rrrl}
\toprule
$\phi$ & lag-1 acf & KS $p$ & single-bidder detectability \\
\midrule
0.0 & $-0.002$ & 0.768 & invisible \\
0.3 & $+0.231$ & 0.833 & detectable \\
0.5 & $+0.401$ & 0.556 & detectable \\
0.9 & $+0.782$ & 0.0001 & detectable, marginal also rejects \\
\bottomrule
\end{tabular}
\end{center}

\noindent
So serial dependence in one agent's residuals is evidence of a persistent common
signal, which is a different and detectable object from the conduct studied here.

\section{Proposition 5: the regret argument in full}

\subsection{Part (i): the construction is not a correlated equilibrium}

At $\rho=1$ the shared draw is common knowledge. A rival playing the construction bids
$b' = \tfrac12 v' + \sigma z$ with $v' \sim U[0,1]$, so on the interior
$\Pr(b' \le c \mid z) = 2(c - \sigma z)$. Maximising $(v-c)\cdot 2(c-\sigma z)$ over
$c$ gives
\[
  c^\ast = \tfrac12 (v + \sigma z) = \tfrac12 v + \tfrac12 \sigma z ,
\]
whereas the construction prescribes $\tfrac12 v + \sigma z$. It loads the common shift
at $1$ where the best response loads $\tfrac12$, so it is not a best response to its
own recommendation and calibrated regret is strictly positive.

For $\rho < 1$ the latent is $\sqrt{\rho}\,z + \sqrt{1-\rho}\,e_i$, and the component
$e_i$ is independent of the whole environment, so any nonzero loading on it is strictly
suboptimal; the optimal loading on the observed latent is the regression coefficient of
the rival's shift on it, which is strictly below the prescribed unit loading. Measured
deviation gains are $0.0193$ at $\rho=0$, $0.0143$ at $0.3$, $0.0097$ at $0.6$,
$0.0058$ at $0.9$ and $0.0047$ at $0.99$, with fitted best-response loadings of
$0.000$, $0.152$, $0.307$, $0.464$ and $0.505$, the last matching the analytic
$\tfrac12$.

\subsection{Part (ii): the only equilibrium the device supports is the competitive one}

Iterating best responses on the loading $\alpha$ in $\beta(v,z) = \tfrac12 v +
\alpha\sigma z$ contracts to zero. At $\rho = 0.6$ the iterates are $1.000, 0.303,
0.091, 0.027, 0.007, 0.002, 0.000$; at $\rho = 0.9$, $1.000, 0.435, 0.189, 0.080,
0.032, 0.013, 0.003$; at $\rho = 1$ the map is exactly $\alpha \mapsto \alpha/2$. The
unique fixed point is $\alpha = 0$, which is the competitive equilibrium: it ignores
the shared draw, carries the competitive marginal rather than the dispersed $F_i$, and
earns zero rent relative to competitive play. Hence no conduct is simultaneously a
correlated equilibrium, profitable, and equal in marginal to $F_i$; profitability and
marginal preservation each require leaving the equilibrium.

\subsection{Part (iii): no excess regret over the competitive conduct}

The competitive conduct at the same marginals is not itself a best response, because a
bid of the form $g(v) + \sigma\varepsilon$ is a best response plus off-equilibrium
noise. Its calibrated swap regret is positive and scales as $\sigma^2$: held out, it is
$0.0020$ at $\sigma = 0.05$, $0.0084$ at $0.10$ and $0.0342$ at $0.20$.

The relevant quantity is therefore the \emph{excess} of the coupled conduct over that
baseline. Estimating swap regret in sample is biased upward, because the best swap map
is chosen and scored on the same rounds; the bias grows with the number of bins, which
is visible in the sweep below. We therefore fit the swap map on one half and evaluate
it on an independent half, and additionally use a bin-free nearest-neighbour swap class.

\begin{table}[t]
\centering
\small
\footnotesize
\begin{tabular}{lrrrr}
\toprule
config. & $\rho{=}0.3$ & $0.6$ & $0.9$ & $0.99$ \\
\midrule
2 coal, .05 & $-0.0007$ & $-0.0012$ & $-0.0017$ & $-0.0019$ \\
2 coal, kNN & $-0.0006$ & $-0.0013$ & $-0.0017$ & $-0.0019$ \\
2 coal, 2 out & $-0.0001$ & $-0.0003$ & $-0.0004$ & $-0.0004$ \\
2 coal, .10 & $-0.0021$ & $-0.0039$ & $-0.0060$ & $-0.0068$ \\
2 coal, .20 & $-0.0059$ & $-0.0121$ & $-0.0185$ & $-0.0215$ \\
3 coal & $-0.0003$ & $-0.0007$ & $-0.0012$ & $-0.0013$ \\
4 coal & $-0.0003$ & $-0.0007$ & $-0.0010$ & $-0.0011$ \\
Beta(2,2) vals & $-0.0018$ & $-0.0032$ & $-0.0044$ & $-0.0049$ \\
\bottomrule
\end{tabular}
\caption{Excess held-out calibrated swap regret of the coupled conduct over the
competitive conduct at the same marginals. Negative in every cell and decreasing in
the coupling. Script \texttt{o1\_attack\_regret.py}.}
\label{tab:regret}
\end{table}

\begin{center}
\small
\begin{tabular}{rrrr}
\toprule
bins & in-sample $\rho{=}0$ & held-out $\rho{=}0$ & held-out excess \\
\midrule
20 & 0.00221 & 0.00206 & $-0.00185$ \\
40 & 0.00241 & 0.00202 & $-0.00181$ \\
80 & 0.00266 & 0.00189 & $-0.00163$ \\
160 & 0.00300 & 0.00180 & $-0.00175$ \\
320 & 0.00359 & 0.00156 & $-0.00169$ \\
\bottomrule
\end{tabular}
\end{center}

\noindent
In-sample regret grows with the number of bins, the signature of the overfitting bias,
while the held-out excess is stable near $-0.0017$. We have no general proof that the
excess is nonpositive for every marginal and copula, and the main paper states it as
verified rather than proved.

\section{Ambient floor: error criteria and composition}

\subsection{Per-pair against family-wise thresholds}

A screen's threshold depends on which error rate it controls, and the two must not be
mixed. On the 24-day panel with 39 builders and 164 scored pairs at a 1000-round
overlap requirement:

\begin{center}
\small
\begin{tabular}{lrr}
\toprule
& per-pair & family-wise \\
\midrule
analytic threshold, $N{=}2{\times}10^{4}$ & $0.0139$ & $0.0255$ \\
empirical null (permutation) & $0.0357$ & $0.0875$ \\
\midrule
floor $0.499$, ratio & $36.0\times$ & $19.5\times$ \\
floor $0.809$, ratio & $58.4\times$ & $31.7\times$ \\
\bottomrule
\end{tabular}
\end{center}

\noindent
The family-wise threshold is calibrated from the permutation distribution of the
\emph{largest} pairwise correlation, which is the correct object for a family-wise
rate. The main paper quotes the family-wise column, $20$ to $32$, as the conservative
choice. The observed 95th percentile is $+0.909$ and the observed maximum $+0.973$.
On the same panel $\kappa = 0.895$ is a ratio of residual to explained standard
deviation, so the unexplained \emph{variance} share is $\kappa^2/(1+\kappa^2) =
44.5\%$. Script \texttt{e33\_fwer\_calibration.py}.

\subsection{Composition against sampling}

Raising a minimum-overlap requirement changes which pairs are scored, so it cannot on
its own distinguish a sampling effect from a composition effect. Fixing the population
separates them: among the 55 pairs sharing at least 3000 slots, we subsample rounds
\emph{within those same pairs}.

\begin{center}
\small
\begin{tabular}{rrrr}
\toprule
rounds & observed p95 & permutation null p95 & gap \\
\midrule
100 & $0.9096$ & $0.1525$ & $0.757$ \\
300 & $0.9015$ & $0.0854$ & $0.816$ \\
1000 & $0.8975$ & $0.0500$ & $0.848$ \\
3000 & $0.8930$ & $0.0282$ & $0.865$ \\
\bottomrule
\end{tabular}
\end{center}

\noindent
Over a thirtyfold increase in rounds on an identical set of pairs, the null collapses
by $0.124$ while the observed floor moves by $0.017$. Sampling error vanishes; the
dependence does not. Script \texttt{e34\_within\_pair\_subsample.py}.

\section{Computing environment, settings, and licence}

\paragraph{Environment.} Every market-side and simulation experiment runs on CPU on a
single machine, an Apple-silicon (arm64) desktop with 10 cores and 16\,GB of memory,
under Python 3.12.13 with numpy 2.5.1 and scipy 1.18.0. No GPU is used and no other
library is required. The language-model agents are the only component that leaves the
machine: they are queried over HTTP through a hosted inference router, so the relevant
"hardware" for that component is the provider's and is not observable to us. The
20 models used are ai21/jamba-large-1.7, amazon/nova-micro-v1, anthropic/claude-3-haiku,
baidu/ernie-4.5-vl-424b-a47b, cohere/command-r-08-2024, deepseek/deepseek-chat,
google/gemma-2-27b-it, google/gemma-3-27b-it, ibm-granite/granite-4.1-8b,
inception/mercury-2, inclusionai/ling-2.6-flash, meta-llama/llama-3.1-8b-instruct,
microsoft/phi-4, minimax/minimax-01, mistralai/mistral-nemo, moonshotai/kimi-k2,
openai/gpt-4o-mini, qwen/qwen-2.5-7b-instruct, tencent/hunyuan-a13b-instruct and
upstage/solar-pro-3. The two Google models share a developer and are clustered
together, giving 19 independent clusters.

\paragraph{Settings.} Randomness enters through explicitly seeded numpy generators; the
order stream is generated once from a fixed seed and reused across every model and
temperature so that all agents face identical orders. Sampling temperature is the only
parameter varied at the agent level. The table below lists the settings behind each
reported number.

\begin{center}
\footnotesize
\begin{tabular}{lll}
\toprule
Component & Setting & Value \\
\midrule
Agents & models $\times$ prompts & $20 \times 3 = 60$ \\
       & rounds per temp & 200 \\
       & temperatures & 0.0, 0.3, 0.6, 1.0, 1.4 \\
       & order seed & fixed, shared \\
       & max tokens & 400 \\
\midrule
Auditor (C) & ridge penalty & $10^{-3}n$ \\
            & cross-fitting folds & 5 \\
            & basis & 6 feats, squares, inter. \\
\midrule
Permut. & temperature test & 20{,}000 \\
        & pair null (market) & 400 \\
            & min pair overlap & 30 / 1000 \\
\midrule
Simulation & paired draws/cell & $3\times10^{6}$ \\
           & process-law rounds & $4\times10^{5}$ \\
\bottomrule
\end{tabular}
\end{center}

\noindent
No parameter in this paper is tuned to maximise a reported quantity. Every parameter
that could be is instead swept, and the whole sweep is reported, so the selection
criterion is that nothing is selected. The swept ranges are:

\begin{center}
\small
\begin{tabular}{ll}
\toprule
Parameter & Values tried, all reported \\
\midrule
Sampling temperature & 0.0, 0.3, 0.6, 1.0, 1.4 \\
Auditor specification & (A), (B), (C) \\
Clustering threshold & 0.85, 0.90, 0.95 \\
Detector threshold & 0.80, 0.85, 0.90, 0.93, 0.95 \\
Submission-timing gap & 0, 1, 2, 4\,s \\
Minimum pair overlap & 100, 300, 1000, 3000 \\
Swap-regret bin count & 20, 40, 80, 160, 320 \\
Coupling $\rho$ & 0, 0.3, 0.6, 0.9, 0.99, 1.0 \\
\bottomrule
\end{tabular}
\end{center}

\noindent
Two of these, the submission-timing gap and the clustering threshold, would in other
hands be operating points chosen to flatter a screen. We report the trade-off curve
instead, and neither is derived from a loss function; the main paper says so. The bin
sweep exists for the opposite reason, to show that the in-sample swap-regret estimator
is biased upward in a way that grows with the bin count, which is why the held-out
estimator is the one we use.

\paragraph{Licence.} The code, the generated language-model bids and the result files in
the code supplement are released for free use in research, under a permissive licence,
on publication. The market data are not redistributed: they are public and are
re-fetched from keyless endpoints by the included scripts.

\section{Panels used, and why they differ}

Three bidder counts appear in the paper because three panels are used, and the
label-dependent results are confined to the one day on which the labelling can be
built.

\begin{center}
\small
\begin{tabular}{lrl}
\toprule
panel & bidders & used for \\
\midrule
24-day relay & 39 & floor, $\kappa$, subsampling \\
single-day archive & 40 & concentration, HHI \\
single-day labelled & 48 & detector, coverage \\
\bottomrule
\end{tabular}
\end{center}

\noindent
The two labellings together cover 24 of those 48 builders. One of them reaches a
builder only through blocks it won, so a builder that never won cannot be labelled at
all. Pooling to a longer panel makes coverage worse rather than better, because the
wider bid feed samples a small fraction of slots and so rarely captures a winning
block's hash: label coverage falls to 9 of 39 builders and the same-operator positives
fall from 24 pairs across six operators to 7 pairs across two. The entity results are
therefore reported as single-day, and no multi-day replication is claimed.

\bibliography{refs}


\end{document}

%% file: figures/figstyle.tex
\usepackage{tikz}
\usepackage{pgfplots}
\pgfplotsset{compat=1.18}
\usetikzlibrary{arrows.meta,positioning,fit,backgrounds,calc,decorations.pathreplacing,patterns,shapes.geometric}

\definecolor{figblue}{HTML}{1F4E79}   
\definecolor{figred}{HTML}{B4483F}    
\definecolor{figamber}{HTML}{D9932B}  
\definecolor{figgray}{HTML}{6E7B8B}   
\definecolor{figfill}{HTML}{DCE5EE}   
\definecolor{figfillr}{HTML}{F0DAD6}  
\definecolor{figink}{HTML}{22262B}    

\pgfplotsset{
  paperaxis/.style={
    width=\columnwidth, height=0.68\columnwidth,
    scale only axis=false,
    axis line style={figgray, line width=0.5pt},
    tick style={figgray, line width=0.5pt},
    label style={font=\small, color=figink},
    tick label style={font=\footnotesize, color=figink},
    title style={font=\small\bfseries, color=figink},
    legend style={
      font=\footnotesize, draw=figgray, line width=0.4pt,
      fill=white, fill opacity=0.92, text opacity=1,
      inner sep=2.5pt, row sep=0.5pt,
    },
    xmajorgrids=false, ymajorgrids=true,
    grid style={figgray!25, line width=0.35pt},
    axis on top,
    clip mode=individual,
  },
  paperaxis small/.style={paperaxis, height=0.44\columnwidth},
}

\tikzset{
  primaryline/.style={figblue,  line width=1.1pt, mark=*,        mark size=1.7pt, mark options={fill=figblue, draw=white, line width=0.4pt}},
  secondline/.style={figred,   line width=1.1pt, mark=square*,  mark size=1.6pt, mark options={fill=figred,  draw=white, line width=0.4pt}},
  thirdline/.style={figamber, line width=1.1pt, mark=triangle*,mark size=2.0pt, mark options={fill=figamber,draw=white, line width=0.4pt}},
  guideline/.style={figgray, line width=0.8pt, densely dashed},
  annot/.style={font=\footnotesize, color=figink, inner sep=1.5pt},
  annotsmall/.style={font=\scriptsize, color=figink, inner sep=1pt},
  box/.style={draw=figgray, line width=0.6pt, rounded corners=2pt, fill=white,
              inner sep=3.5pt, font=\footnotesize, align=center, text=figink},
  boxblue/.style={box, draw=figblue, fill=figfill},
  boxred/.style={box, draw=figred, fill=figfillr},
  flow/.style={-{Stealth[length=4pt,width=3.4pt]}, figgray, line width=0.7pt},
  flowblue/.style={flow, draw=figblue},
}

%% file: figures/fig1_mechanism.tex
\begin{tikzpicture}[x=1cm,y=1cm]
\def\cloudind{-0.148/-0.574, 0.463/-0.525, 0.165/0.316, -0.261/-0.057, 0.209/0.012, -0.309/0.033, 0.009/0.286, 0.013/-0.478, -0.223/-0.313, 0.393/0.522, -0.076/0.378, 0.214/-0.138, 0.464/0.186, -0.091/-0.372, -0.369/-0.043, -0.002/0.121, 0.113/-0.282, 0.185/-0.367, -0.238/0.320, 0.255/-0.049, 0.552/-0.219, -0.141/0.217, 0.064/0.258, -0.085/0.299, -0.122/0.244, 0.047/0.094, 0.450/0.073, -0.363/-0.161, 0.203/-0.123, -0.068/0.078, -0.449/0.369, 0.118/0.361, 0.014/-0.084, 0.047/0.059, -0.129/-0.345, -0.083/-0.044, -0.254/0.004, -0.033/0.451, -0.370/0.502, -0.229/-0.192, -0.289/0.408, -0.403/0.230, 0.495/0.064, 0.378/-0.659, -0.448/0.258, 0.364/-0.243, 0.546/0.004, -0.645/0.190, 0.364/-0.081, -0.362/-0.109, 0.450/-0.804, 0.518/-0.041, -0.559/0.028, -0.245/0.412, -0.323/0.070, 0.546/-0.548, -0.633/-0.052, 0.184/0.453, -0.133/-0.391, 0.268/0.250, -0.296/-0.067, 0.698/0.230, -0.208/-0.042, -0.125/-0.084, -0.406/0.135, 0.401/0.529, -0.690/-0.073, 0.234/-0.139, 0.240/0.051, 0.456/0.804, -0.317/-0.383, 0.207/-0.555, 0.168/-0.335, 0.002/0.317, 0.509/-0.218, 0.194/0.196, -0.032/0.558, -0.275/-0.819, 0.199/-0.573, -0.581/-0.111, 0.427/0.115, 0.184/0.343, -0.096/-0.645, -0.433/0.066, -0.061/0.269, 0.390/0.089, 0.248/0.338, 0.416/0.156, 0.073/0.206, -0.294/0.012, -0.056/-0.153, 0.151/-0.347, 0.280/-0.116, -0.388/0.161, 0.100/0.153, -0.413/-0.120, 0.006/-0.156, -0.654/-0.207, -0.557/0.628, 0.158/-0.364, 0.019/-0.109, 0.170/0.146, 0.306/-0.285, -0.195/-0.313, 0.176/-0.156, 0.134/0.036, 0.026/0.099, -0.020/0.248, 0.050/0.196, -0.226/-0.445, -0.777/-0.077, -0.525/-0.169, -0.268/-0.313, 0.422/-0.801, -0.012/0.084, -0.330/0.265, 0.208/0.372, 0.498/-0.085, 0.248/0.888, 0.230/0.315}%
\def\cloudcpl{-0.362/0.036, -0.645/-0.804, -0.133/-0.138, 0.230/0.094, 0.170/0.315, 0.168/-0.169, -0.208/-0.067, 0.546/0.558, 0.207/0.250, -0.268/-0.364, -0.317/-0.313, 0.208/0.012, 0.002/-0.139, 0.214/0.073, -0.525/-0.372, -0.091/0.084, 0.364/0.316, -0.226/-0.345, -0.229/0.190, -0.002/-0.153, 0.118/0.186, -0.433/-0.156, -0.448/-0.548, -0.363/0.066, -0.370/-0.218, 0.416/0.115, 0.019/0.258, -0.633/-0.555, -0.020/0.265, -0.309/-0.085, 0.280/0.196, 0.113/-0.049, -0.654/-0.525, 0.306/0.248, -0.323/-0.282, 0.151/0.161, 0.546/0.522, -0.369/-0.573, 0.050/-0.156, 0.047/-0.313, 0.014/-0.192, 0.165/0.051, 0.393/-0.161, -0.557/-0.659, 0.390/0.338, 0.268/0.070, -0.141/-0.347, 0.006/-0.123, -0.261/-0.383, -0.403/-0.120, -0.289/-0.057, 0.518/0.258, -0.195/-0.043, -0.033/-0.313, 0.176/-0.052, 0.255/0.628, -0.296/-0.084, -0.690/-0.574, 0.184/0.804, 0.009/0.372, -0.275/-0.445, -0.129/0.033, -0.068/0.206, -0.125/-0.084, 0.209/0.156, 0.240/0.153, -0.223/-0.243, 0.378/0.299, -0.012/-0.478, 0.463/0.244, -0.056/0.028, 0.185/0.135, 0.064/0.217, -0.245/-0.109, 0.495/0.888, 0.134/0.196, 0.450/0.317, -0.294/0.004, -0.254/-0.801, -0.032/0.146, 0.401/0.412, 0.234/0.502, 0.199/-0.077, 0.248/-0.207, -0.559/-0.335, -0.388/-0.219, 0.698/0.451, -0.148/-0.044, 0.203/0.059, 0.456/0.453, -0.096/0.004, 0.450/0.230, 0.013/0.064, -0.061/0.378, 0.184/0.099, 0.248/0.012, 0.509/0.343, 0.158/0.361, 0.194/0.269, 0.047/0.089, -0.330/-0.109, -0.085/0.286, 0.422/0.529, -0.122/-0.073, 0.464/0.408, 0.552/0.320, -0.777/-0.391, -0.076/-0.367, 0.364/-0.041, 0.427/0.078, -0.083/-0.111, 0.026/0.121, -0.581/-0.645, -0.238/-0.081, -0.413/-0.116, -0.449/-0.819, 0.498/0.369, 0.073/0.230, 0.100/-0.042, -0.406/-0.285}%
\def\rugx{-0.148, 0.463, 0.165, -0.261, 0.209, -0.309, 0.009, 0.013, -0.223, 0.393, -0.076, 0.214, 0.464, -0.091, -0.369, -0.002, 0.113, 0.185, -0.238, 0.255, 0.552, -0.141, 0.064, -0.085, -0.122, 0.047, 0.450, -0.363, 0.203, -0.068, -0.449, 0.118, 0.014, 0.047, -0.129, -0.083, -0.254, -0.033, -0.370, -0.229, -0.289, -0.403, 0.495, 0.378, -0.448, 0.364, 0.546, -0.645, 0.364, -0.362, 0.450, 0.518, -0.559, -0.245, -0.323, 0.546, -0.633, 0.184, -0.133, 0.268, -0.296, 0.698, -0.208, -0.125, -0.406, 0.401, -0.690, 0.234, 0.240, 0.456, -0.317, 0.207, 0.168, 0.002, 0.509, 0.194, -0.032, -0.275, 0.199, -0.581, 0.427, 0.184, -0.096, -0.433, -0.061, 0.390, 0.248, 0.416, 0.073, -0.294, -0.056, 0.151, 0.280, -0.388, 0.100, -0.413, 0.006, -0.654, -0.557, 0.158, 0.019, 0.170, 0.306, -0.195, 0.176, 0.134, 0.026, -0.020, 0.050, -0.226, -0.777, -0.525, -0.268, 0.422, -0.012, -0.330, 0.208, 0.498, 0.248, 0.230}%
\def\rugy{-0.574, -0.525, 0.316, -0.057, 0.012, 0.033, 0.286, -0.478, -0.313, 0.522, 0.378, -0.138, 0.186, -0.372, -0.043, 0.121, -0.282, -0.367, 0.320, -0.049, -0.219, 0.217, 0.258, 0.299, 0.244, 0.094, 0.073, -0.161, -0.123, 0.078, 0.369, 0.361, -0.084, 0.059, -0.345, -0.044, 0.004, 0.451, 0.502, -0.192, 0.408, 0.230, 0.064, -0.659, 0.258, -0.243, 0.004, 0.190, -0.081, -0.109, -0.804, -0.041, 0.028, 0.412, 0.070, -0.548, -0.052, 0.453, -0.391, 0.250, -0.067, 0.230, -0.042, -0.084, 0.135, 0.529, -0.073, -0.139, 0.051, 0.804, -0.383, -0.555, -0.335, 0.317, -0.218, 0.196, 0.558, -0.819, -0.573, -0.111, 0.115, 0.343, -0.645, 0.066, 0.269, 0.089, 0.338, 0.156, 0.206, 0.012, -0.153, -0.347, -0.116, 0.161, 0.153, -0.120, -0.156, -0.207, 0.628, -0.364, -0.109, 0.146, -0.285, -0.313, -0.156, 0.036, 0.099, 0.248, 0.196, -0.445, -0.077, -0.169, -0.313, -0.801, 0.084, 0.265, 0.372, -0.085, 0.888, 0.315}%
%
\begin{scope}[shift={(0,0)}]
  \node[annot, font=\small\bfseries, anchor=west] at (-2.75,4.50) {(a) Shared origin, no channel};
  \node[annotsmall, text=figgray, anchor=west] at (-2.75,4.14) {two firms, two prompts, no lateral channel};

  \node[boxblue, minimum width=4.4cm] (src) at (0,3.55)
       {Common source\\[-0.5pt]{\scriptsize same base model $\cdot$ shared PRF on $x_t$}};

  \node[box, minimum width=2.05cm, minimum height=0.84cm] (ag1) at (-1.55,2.08)
       {Agent 1\\[-0.5pt]{\scriptsize firm A, $\pi_1$}};
  \node[box, minimum width=2.05cm, minimum height=0.84cm] (ag2) at ( 1.55,2.08)
       {Agent 2\\[-0.5pt]{\scriptsize firm B, $\pi_2$}};

  \node[box, minimum width=3.0cm, minimum height=0.52cm] (pub) at (0,0.78)
       {public round data $x_t$};

  \draw[flowblue] (-1.15,3.13) -- (ag1.north);
  \draw[flowblue] ( 1.15,3.13) -- (ag2.north);
  \node[annotsmall, text=figblue] at (0,2.80) {shared draw $z_t$};

  \draw[flow] (-1.38,1.04) -- (ag1.south);
  \draw[flow] ( 1.38,1.04) -- (ag2.south);

  \draw[figgray, densely dotted, line width=0.7pt] (ag1.east) -- (ag2.west);
  \fill[white] (0,2.08) circle (0.21);
  \draw[figred, line width=1.0pt] (-0.14,1.94) -- (0.14,2.22);
  \draw[figred, line width=1.0pt] (-0.14,2.22) -- (0.14,1.94);
  \node[annotsmall, text=figred, align=center] at (0,1.34) {no messages\\[-1pt]exchanged};

  \node[annotsmall] at (0,0.05) {the only coupling is \textbf{upstream} and \textbf{public}};
\end{scope}

\begin{scope}[shift={(5.95,0)}]
  \node[annot, font=\small\bfseries, anchor=west] at (-2.75,4.50) {(b) Marginals are identical};
  \node[annotsmall, text=figgray, anchor=west] at (-2.75,4.14) {a price-level audit sees nothing here};

  \draw[flow] (-2.05,1.15) -- (2.35,1.15);
  \draw[flow] (-2.05,1.10) -- (-2.05,3.50);
  \node[annotsmall, anchor=north east] at (2.35,1.08) {bid $b$};
  \node[annotsmall, rotate=90] at (-2.32,2.30) {density};

  \draw[figred, line width=1.4pt]
        plot[domain=-1.95:1.95, samples=140, smooth] ({\x},{1.15+2.0*exp(-\x*\x/0.72)});
  \draw[figblue, line width=0.9pt, densely dashed]
        plot[domain=-1.95:1.95, samples=140, smooth] ({\x},{1.15+2.0*exp(-\x*\x/0.72)});

  \draw[figred, line width=1.4pt] (0.85,3.32) -- (1.30,3.32);
  \node[annotsmall, anchor=west] at (1.38,3.32) {competitive};
  \draw[figblue, line width=0.9pt, densely dashed] (0.85,2.94) -- (1.30,2.94);
  \node[annotsmall, anchor=west] at (1.38,2.94) {collusive};

  \node[annotsmall, align=center] (idc) at (-1.32,3.26) {identical by\\construction};
  \draw[flow] (-1.05,2.97) -- (-0.60,2.44);

  \node[annotsmall] at (0,0.05) {any test on this panel has \textbf{power} $\boldsymbol{=\alpha}$};
\end{scope}

\begin{scope}[shift={(11.9,0)}]
  \node[annot, font=\small\bfseries, anchor=west] at (-2.75,4.50) {(c) The copula carries the conspiracy};
  \node[annotsmall, text=figgray, anchor=west] at (-2.75,4.14) {joint law of the residual bids $(e_1,e_2)$};

  \begin{scope}[shift={(-1.5,2.93)}]
    \draw[figgray, line width=0.5pt] (-0.95,-0.95) rectangle (0.95,0.95);
    \begin{scope}
      \clip (-0.95,-0.95) rectangle (0.95,0.95);
      \draw[figgray!50, line width=0.35pt] (-0.95,-0.95) -- (0.95,0.95);
      \foreach \a/\b in \cloudind {\fill[figred, opacity=0.45] (\a,\b) circle (0.45pt);}
    \end{scope}
    \foreach \u in \rugx {\draw[figgray, opacity=0.5, line width=0.25pt] (\u,-0.95) -- (\u,-1.03);}
    \foreach \v in \rugy {\draw[figgray, opacity=0.5, line width=0.25pt] (-0.95,\v) -- (-1.03,\v);}
    \node[annotsmall, text=figred] at (0,-1.24) {independent, $\rho=0$};
  \end{scope}

  \begin{scope}[shift={(1.5,2.93)}]
    \draw[figgray, line width=0.5pt] (-0.95,-0.95) rectangle (0.95,0.95);
    \begin{scope}
      \clip (-0.95,-0.95) rectangle (0.95,0.95);
      \draw[figgray!50, line width=0.35pt] (-0.95,-0.95) -- (0.95,0.95);
      \foreach \a/\b in \cloudcpl {\fill[figblue, opacity=0.45] (\a,\b) circle (0.45pt);}
    \end{scope}
    \foreach \u in \rugx {\draw[figgray, opacity=0.5, line width=0.25pt] (\u,-0.95) -- (\u,-1.03);}
    \foreach \v in \rugy {\draw[figgray, opacity=0.5, line width=0.25pt] (-0.95,\v) -- (-1.03,\v);}
    \node[annotsmall, text=figblue] at (0,-1.24) {coupled, $\rho=0.7$};
  \end{scope}

  \node[annotsmall, text=figgray, rotate=90] at (0.02,2.93) {same marginals};

  \node[annotsmall] at (0,1.33) {$\mathbb{E}[\max(e_1,e_2)]$ = what the seller gets};
  \node[annotsmall, anchor=east, text=figred]  at (-1.38,0.99) {indep.};
  \node[annotsmall, anchor=east, text=figblue] at (-1.38,0.65) {coupled};
  \fill[figred,  opacity=0.85] (-1.30,0.915) rectangle ++(1.951,0.15);
  \fill[figblue, opacity=0.85] (-1.30,0.575) rectangle ++(0.940,0.15);
  \node[annotsmall, anchor=west, text=figred]  at ({0.651+0.08},0.99) {0.59$\sigma$};
  \node[annotsmall, anchor=west, text=figblue] at ({-0.360+0.08},0.65) {0.28$\sigma$};

  \draw[guideline, line width=0.4pt] (0.651,0.915) -- (0.651,0.33);
  \draw[guideline, line width=0.4pt] (-0.360,0.575) -- (-0.360,0.33);
  \draw[{Stealth[length=3pt]}-{Stealth[length=3pt]}, figgray, line width=0.5pt]
        (-0.360,0.33) -- (0.651,0.33);
  \node[annotsmall, anchor=west, text=figgray] at ({0.651+0.10},0.33) {$-$52\%};

  \node[annotsmall] at (0,0.05) {\textbf{revenue falls}, the rent accrues to the coalition};
\end{scope}

\draw[figgray, line width=0.5pt] (0,-0.14) -- (0,-0.62) -- (5.95,-0.62);
\draw[flow, line width=0.5pt] (5.95,-0.62) -- (5.95,-0.20);
\draw[figgray, line width=0.5pt] (5.95,-0.62) -- (11.90,-0.62);
\draw[flow, line width=0.5pt] (11.90,-0.62) -- (11.90,-0.20);
\node[annotsmall, fill=white, inner sep=2pt, text=figgray] at (2.95,-0.62)
     {one shared draw $z_t$ produces \emph{both} signatures at once};
\end{tikzpicture}

%% file: figures/fig2_power.tex
\begin{tikzpicture}
\begin{axis}[
  paperaxis,
  height=0.80\columnwidth,
  xmode=log,
  xmin=0.0034, xmax=1.15,
  ymin=0, ymax=1.18,
  xlabel={coupling strength $\rho$},
  ylabel={power},
  xlabel near ticks, ylabel near ticks,
  xtick={0.005,0.01,0.03,0.1,0.3,1},
  xticklabels={$0$,$0.01$,$0.03$,$0.1$,$0.3$,$1$},
  minor xtick={0.02,0.05,0.07,0.2,0.5,0.7},
  ytick={0,0.2,0.4,0.6,0.8,1},
  yticklabels={$0$,$0.2$,$0.4$,$0.6$,$0.8$,$1$},
  tick label style={font=\scriptsize, color=figink},
  label style={font=\scriptsize, color=figink},
  xmajorgrids=true, ymajorgrids=true,
]

\addplot[draw=none, fill=figfillr, fill opacity=0.95, forget plot] coordinates {
  (0.005,0.055) (0.01,0.035) (0.02,0.045) (0.05,0.030) (0.10,0.055)
  (0.20,0.045) (0.40,0.040) (0.70,0.050) (1.00,0.055)
  (1.00,0.020) (0.70,0.020) (0.40,0.015) (0.20,0.020) (0.10,0.030)
  (0.05,0.015) (0.02,0.020) (0.01,0.015) (0.005,0.010)
} -- cycle;

\addplot[guideline, figamber, line width=0.7pt, forget plot]
  coordinates {(0.0034,0.05) (1.15,0.05)};

\addplot[figred, line width=0.35pt, forget plot] coordinates {
  (0.005,0.055) (0.01,0.015) (0.02,0.045) (0.05,0.030) (0.10,0.055)
  (0.20,0.020) (0.40,0.015) (0.70,0.025) (1.00,0.055)};
\addplot[figred, line width=0.35pt, forget plot] coordinates {
  (0.005,0.020) (0.01,0.035) (0.02,0.045) (0.05,0.020) (0.10,0.035)
  (0.20,0.030) (0.40,0.030) (0.70,0.050) (1.00,0.050)};
\addplot[figred, line width=0.35pt, forget plot] coordinates {
  (0.005,0.020) (0.01,0.025) (0.02,0.020) (0.05,0.015) (0.10,0.040)
  (0.20,0.045) (0.40,0.040) (0.70,0.050) (1.00,0.045)};
\addplot[figred, line width=0.35pt, forget plot] coordinates {
  (0.005,0.010) (0.01,0.030) (0.02,0.030) (0.05,0.020) (0.10,0.030)
  (0.20,0.020) (0.40,0.030) (0.70,0.020) (1.00,0.020)};

\addplot[figblue, line width=0.6pt, densely dotted, forget plot] coordinates {
  (0.005,0.065) (0.01,0.070) (0.02,0.065) (0.05,0.075) (0.10,0.110)
  (0.20,0.375) (0.40,0.990) (0.70,1.000) (1.00,1.000)};
\addplot[figblue, line width=0.8pt, densely dashed, forget plot] coordinates {
  (0.005,0.030) (0.01,0.065) (0.02,0.050) (0.05,0.065) (0.10,0.590)
  (0.20,1.000) (0.40,1.000) (0.70,1.000) (1.00,1.000)};
\addplot[figblue, line width=1.0pt, dash pattern=on 3.2pt off 1.4pt, forget plot] coordinates {
  (0.005,0.030) (0.01,0.085) (0.02,0.055) (0.05,0.725) (0.10,1.000)
  (0.20,1.000) (0.40,1.000) (0.70,1.000) (1.00,1.000)};
\addplot[figblue, line width=1.25pt, forget plot] coordinates {
  (0.005,0.050) (0.01,0.075) (0.02,0.470) (0.05,1.000) (0.10,1.000)
  (0.20,1.000) (0.40,1.000) (0.70,1.000) (1.00,1.000)};

\draw[figgray!70, line width=0.5pt, densely dotted]
  (axis cs:0.00707,0) -- (axis cs:0.00707,1.0);

\node[annotsmall, anchor=west, text=figblue] at (axis cs:0.0042,1.09)
  {\rule[0.55ex]{2.4mm}{0.9pt}\,pairwise audits};
\node[annotsmall, anchor=east, text=figred] at (axis cs:1.13,1.09)
  {\rule[0.45ex]{2.4mm}{1.5pt}\,single-agent audits, all $N$};

\node[annotsmall, text=figblue, fill=white, fill opacity=0.75, text opacity=1,
      inner sep=0.7pt] at (axis cs:0.01913,0.82) {$N{=}20\mathrm{k}$};
\node[annotsmall, text=figblue, fill=white, fill opacity=0.75, text opacity=1,
      inner sep=0.7pt] at (axis cs:0.04817,0.82) {$5\mathrm{k}$};
\node[annotsmall, text=figblue, fill=white, fill opacity=0.75, text opacity=1,
      inner sep=0.7pt] at (axis cs:0.09675,0.82) {$1\mathrm{k}$};
\node[annotsmall, text=figblue, fill=white, fill opacity=0.75, text opacity=1,
      inner sep=0.7pt] at (axis cs:0.2205,0.82) {$200$};

\node[annotsmall, anchor=east, align=right] at (axis cs:1.13,0.47)
  {power $=\alpha$\\for all $\rho$};
\draw[flow] (axis cs:0.33,0.385) .. controls (axis cs:0.30,0.21) ..
            (axis cs:0.225,0.070);

\node[annotsmall, anchor=east, text=figamber] at (axis cs:1.13,0.135)
  {$\alpha = 0.05$};

\end{axis}
\end{tikzpicture}

%% file: figures/fig3_temperature.tex
\begin{tikzpicture}
\begin{axis}[
  paperaxis,
  width=0.97\columnwidth,
  height=0.69\columnwidth,
  xmin=-0.09, xmax=1.49,
  xtick={0,0.3,0.6,1.0,1.4},
  xticklabels={0.0,0.3,0.6,1.0,1.4},
  xlabel={sampling temperature},
  ymin=-0.098, ymax=0.225,
  ytick={-0.05,0,0.05,0.10,0.15},
  yticklabels={$-0.05$,$0.00$,$0.05$,$0.10$,$0.15$},
  ylabel={residual correlation},
  xmajorgrids=false, ymajorgrids=true,
]

\addplot[draw=none, fill=figfill, fill opacity=0.6, forget plot] coordinates {
  (0,-0.0253) (0.3,0.0296) (0.6,-0.0253) (1.0,-0.0320) (1.4,-0.0238)
  (1.4,0.0151) (1.0,0.0142) (0.6,0.0313) (0.3,0.0783) (0,0.1542)
} -- cycle;

\addplot[figgray, line width=0.6pt, forget plot]
  coordinates {(-0.09,0) (1.49,0)};

\addplot[only marks, mark=*, mark size=0.8pt,
  mark options={fill=figblue, draw=none, fill opacity=0.38}, forget plot]
  coordinates {
    (0.3,0.167) (0.3,0.128) (0.3,0.100) (0.3,0.097) (0.3,0.094)
    (0.3,0.087) (0.3,0.071) (0.3,0.059) (0.3,0.054) (0.3,0.050)
    (0.3,0.041) (0.3,0.036) (0.3,0.029) (0.3,0.027) (0.3,0.017)
    (0.3,0.016) (0.3,0.001) (0.3,-0.000) (0.3,-0.048)
  };

\addplot[figblue, line width=0.9pt, densely dashed, forget plot] coordinates {
  (0,0.1471) (0.3,0.1101) (0.6,0.0658) (1.0,0.0278) (1.4,0.0282)};

\addplot[secondline, forget plot] coordinates {
  (0,0.0056) (0.3,0.0001) (0.6,0.0023) (1.0,0.0007) (1.4,-0.0007)};

\addplot[primaryline, forget plot] coordinates {
  (0,0.0702) (0.3,0.0527) (0.6,0.0090) (1.0,-0.0078) (1.4,-0.0040)};

\node[annotsmall, anchor=east, text=figblue] at (axis cs:1.45,0.132)
  {same model, weak auditor};
\draw[figblue!70, line width=0.4pt] (axis cs:1.15,0.120) -- (axis cs:1.15,0.037);

\node[annotsmall, anchor=east, text=figblue] at (axis cs:1.45,-0.072)
  {same model, full auditor};
\draw[figblue!70, line width=0.4pt] (axis cs:1.18,-0.060) -- (axis cs:1.18,-0.012);

\node[annotsmall, anchor=west, text=figred] at (axis cs:0.062,-0.028)
  {different models};
\draw[figred!70, line width=0.4pt] (axis cs:0.056,-0.024) -- (axis cs:0.014,0.004);

\node[annotsmall, anchor=west, text=figblue] at (axis cs:0.03,0.205)
  {trend $p{=}0.002$};

\node[annotsmall, anchor=west, text=figgray, fill=white, fill opacity=0.85,
      text opacity=1, inner sep=0.8pt] at (axis cs:0.348,0.140)
  {19 developers};
\draw[figgray, line width=0.4pt] (axis cs:0.341,0.138) -- (axis cs:0.312,0.129);

\end{axis}
\end{tikzpicture}

%% file: figures/fig4_floor.tex
\begin{tikzpicture}
\begin{axis}[
  paperaxis,
  height=0.72\columnwidth,
  xmajorgrids=false, ymajorgrids=true,
  xmode=log,
  xmin=75, xmax=4000,
  ymin=-0.12, ymax=1.02,
  xtick={100,300,1000,3000},
  xticklabels={100,300,1000,3000},
  xminorticks=false,
  ytick={0,0.2,0.4,0.6,0.8,1.0},
  yticklabels={0.0,0.2,0.4,0.6,0.8,1.0},
  xlabel={rounds shared per pair (log)},
  ylabel={pairwise residual correlation},
  legend style={at={(0.985,0.19)}, anchor=south east},
]

\path[fill=figfill, fill opacity=0.6]
  (axis cs:100,0.829) -- (axis cs:300,0.883) -- (axis cs:1000,0.909) --
  (axis cs:3000,0.892) --
  (axis cs:3000,0.028) -- (axis cs:1000,0.050) -- (axis cs:300,0.078) --
  (axis cs:100,0.130) -- cycle;

\path[fill=figamber, fill opacity=0.12]
  (axis cs:75,0.499) rectangle (axis cs:4000,0.809);
\draw[figamber, line width=0.5pt] (axis cs:75,0.499) -- (axis cs:4000,0.499);
\draw[figamber, line width=0.5pt] (axis cs:75,0.809) -- (axis cs:4000,0.809);
\node[annotsmall, anchor=east, align=right]
  at (axis cs:3850,0.700) {cleaned cross-operator\\floor: 0.50--0.81};

\draw[guideline, figamber] (axis cs:75,0.026) -- (axis cs:4000,0.026);
\node[annotsmall, anchor=west]
  at (axis cs:95,-0.056) {family-wise threshold $\rho^{*}=0.026$};

\draw[{Stealth[length=3.6pt,width=3pt]}-{Stealth[length=3.6pt,width=3pt]},
      figamber, line width=0.7pt]
  (axis cs:180,0.026) -- (axis cs:180,0.499);
\node[annot, anchor=west] at (axis cs:205,0.27) {20--32$\times$};

\addplot[primaryline] coordinates
  {(100,0.829) (300,0.883) (1000,0.909) (3000,0.892)};
\addlegendentry{observed p95}
\addplot[secondline] coordinates
  {(100,0.130) (300,0.078) (1000,0.050) (3000,0.028)};
\addlegendentry{permutation null p95}

\end{axis}
\end{tikzpicture}